\newcommand{\full}[1]{#1}
\providecommand{\algorithmname}{Algorithm}
\theoremstyle{plain}
\newtheorem{thm}{\protect\theoremname}
  \theoremstyle{definition}
  \theoremstyle{definition}
  \newtheorem{defn}{\protect\definitionname}
  \theoremstyle{plain}
  \newtheorem{lem}{\protect\lemmaname}
  \newtheorem{fact}{Fact}
  \newtheorem{conj}{Conjecture}
  \newtheorem*{conj*}{Conjecture}
  \newtheorem*{conjrep*}{Conjecture~\ref{con:entropy} (repeated)}
\newcommand{\expect}[1]{\mathbb{E}\left[{#1}\right]}
\renewcommand{\geq}{\geqslant}
\renewcommand{\leq}{\leqslant}
\renewcommand{\epsilon}{\varepsilon}
\renewcommand{\Delta}{\Updelta}
\newcommand{\calF}{\ensuremath{\mathcal{F}}}
\newcommand{\contrib}{\ensuremath{\textup{CT}}}
\newcommand{\expected}[1]{\ensuremath{\mathbb{E}[#1]}}
\newcommand{\expecteddisplay}[1]{\ensuremath{\mathbb{E}\left[#1\right]}}
\newcommand{\Xv}{\ensuremath{U_{\ell,t}}}
\newcommand{\Xvknown}{\ensuremath{\widetilde{U}_{\ell,t}}}
\newcommand{\Xvfix}{\ensuremath{\widetilde{u}_{\ell,t}}}
\newcommand{\Yv}{\ensuremath{A_{\ell,t}}}
\newcommand{\vitset}{\ensuremath{\mathcal{I}_{\ell,t}}}
\newcommand{\vitsetprime}{\ensuremath{\mathcal{I}_{\ell',t'}}}
\newcommand{\vitsetonlytprime}{\ensuremath{\mathcal{I}_{\ell,t'}}}
\newcommand{\git}{\ensuremath{G_{\ell,t}}}
\newcommand{\gitset}{\ensuremath{\mathcal{G}_{\ell,t}}}
\newcommand{\vit}{\ensuremath{I_{\ell,t}}}
\newcommand{\entropy}{\ensuremath{\kappa}}
\renewcommand{\S}{{\ensuremath{S}}} 
\newcommand{\U}{{\ensuremath{U}}} 
\newcommand{\arrive}{\ensuremath{\textsc{update}}}
\newcommand{\intlens}{\ensuremath{L}}
\newcommand{\arrivalset}{\ensuremath{T}}
\newcommand{\inta}{\ensuremath{t_0}}
\newcommand{\intb}{\ensuremath{t_1}}
\newcommand{\intc}{\ensuremath{t_2}}
\newcommand{\intd}{\ensuremath{t_3}}
\newcommand{\Fsub}[2]{\ensuremath{F_{#1}^{#2}}}
\newcommand{\Usub}[2]{\ensuremath{U_{#1}^{#2}}}
\newcommand{\insertdiagram}[1]{\includegraphics[scale=.8]{#1}}
\newcommand{\Patrascu}{P{\v a}tra{\c s}cu\xspace}
\newcommand{\st}{\,|\,}
  \providecommand{\definitionname}{Definition}
  \providecommand{\lemmaname}{Lemma}
  \providecommand{\problemname}{Problem}
\providecommand{\theoremname}{Theorem}
\begin{document}


\title{The complexity of computation in bit streams}

\author{Rapha\"{e}l Clifford \quad ~ Markus Jalsenius \quad ~ Benjamin Sach\\
 Department of Computer Science\\ University of Bristol\\ Bristol, UK}

\date{}

\maketitle



\begin{abstract}
We revisit the complexity of online computation in the cell probe model.  We consider a class of problems where we are first given a fixed pattern or vector $F$ of $n$ symbols and then one symbol arrives at a time in a stream. After each symbol has arrived we must output some function of $F$ and the $n$-length suffix of the arriving stream.  Cell probe bounds of $\Omega(\delta\lg{n}/w)$ have previously been shown for both convolution and Hamming distance in this setting, where $\delta$ is the size of a symbol in bits and $w\in\Omega(\lg{n})$ is the cell size in bits.  However, when $\delta$ is a constant, as it is in many natural situations, these previous results no longer give us non-trivial bounds. 

We introduce a new \emph{lop-sided information transfer} proof technique which enables us to prove meaningful lower bounds even for constant size input alphabets. We use our new framework to prove an amortised cell probe lower bound of $\Omega(\lg^2 n/(w\cdot \lg \lg n))$ time per arriving bit for an online version  of a well studied problem known as pattern matching with address errors.   This is the first non-trivial cell probe lower bound for any online problem on bit streams that still holds when the cell sizes are large. We also show the same  bound for online convolution conditioned on a new combinatorial conjecture related to Toeplitz matrices.
\end{abstract}

\section{Introduction}
We revisit the complexity of online computation in the cell probe model. In recent years there has been considerable progress towards the challenging goal of establishing lower bounds for both static and dynamic data structure problems.  A third class of data structure problems, one which falls somewhere between these two classic settings, is online computation in a streaming setting. Here one symbol arrives at a time and a new output must be given after each symbol arrives and before the next symbol is processed. The key conceptual difference to a standard dynamic data structure problem is that although each arriving symbol can be regarded as a new update operation, there is only one type of query which is to output the latest value of some function of the stream.  

Online pattern matching is particularly suited to study in this setting and cell probe lower bounds have previously been shown  for different measures of distance including Hamming distance, inner product/convolution and edit distance~\cite{CJ:2011,CJS:2013, CJS-soda:2015}.  All these previous cell probe lower bounds have relied on only one proof technique, the so-called \emph{information transfer technique} of \Patrascu and Demaine~\cite{PD2006:Low-Bounds}.   In loose terms the basic idea is as follows. First one defines a random input distribution over updates.  Here we regard an arriving symbol as an update and after each update we perform one query which simply returns the latest distance between a predefined pattern and the updated suffix of the stream. Then one has to argue that knowledge of the answers to $\ell$ consecutive queries is sufficient to infer at least a constant fraction of the information encoded by $\ell$ consecutive updates that occurred in the past. If one can show this is true for all power of two lengths $\ell$ and ensure there is no double counting then the resulting lower bound follows by summing over all these power of two lengths.   

For the most natural cell size $w\in\Omega(\lg{n})$,   a cell probe lower bound of $\Omega(\delta \lg{n}/w)$ for both Hamming distance and convolution using this method was shown, where $\delta$ is the size of an input symbol, $w$ is the cell size in bits and $n$ is the length of the fixed pattern~\cite{CJ:2011,CJS:2013}.   When $\delta = w$, there is also a matching upper bound in the cell probe model and so no further progress is possible.  However, when the symbol size $\delta$ is in fact constant, the best lower bound that is derivable reduces trivially to be constant.  This is a particularly unfortunate situation as arguably the \emph{most} natural setting of parameters is when the input alphabet is of constant size but the cell size is not.

To make matters worse, this limitation is neither  specific to pattern matching problems nor even to online problems in general. As it is a fundamental feature of the information transfer technique the requirement to have large input alphabets also applies to a wide of dynamic data structure problems for which the information transfer technique has been up to this point the lower bound method of choice.   As a result we see the challenge of providing a new proof technique which can meaningfully handle constant sized alphabets as fundamental to the aim of advancing our knowledge of the true complexity of both online and dynamic problems.

We introduce a new lop-sided version of the information transfer technique that enables us to give meaningful lower bounds for precisely this setting, that is when $\delta \in O(1)$ and $w \in \Omega(\lg{n})$.  Our proof technique will rely on being able to show for specific problems that we need  only $\ell$ query answers to give at least a constant fraction of the information encoded in $\ell\lg{\ell}$ updates.  

We demonstrate our new framework by first by applying it to the well studied online convolution problem.  For this problem we give a conditional cell probe lower bound which depends on a new combinatorial conjecture involving Toeplitz matrices.  We then show that it is possible to derive an identical but this time unconditional lower bound by applying the lop-sided information transfer technique to a problem called online pattern matching with address errors~\cite{AABLLPSV:2009}.  This measure of distance arises in pattern matching problems where errors occur not in the content of the data but in the addresses for where the data is stored.

\subsection*{Previous cell probe lower bounds}

Our bounds hold in a particularly strong computational model, the \emph{cell-probe model}, introduced originally by Minsky and Papert~\cite{MP:1969} in a different context and then subsequently by Fredman~\cite{Fredman:1978} and Yao~\cite{Yao1981:Tables}. In this model, there is a separation between the computing unit and the memory, which is external and consists of a set of cells of $w$ bits each. The computing unit cannot remember any information between operations. Computation is free and the cost is measured only in the number of cell reads or writes (cell-probes). This general view makes the model very strong, subsuming for instance the popular word-RAM model. 

The first techniques known for establishing dynamic data structure lower bounds had historically been based on the chronogram technique of Fredman and Saks~\cite{FS1989:chronogram} which can at best give bounds of $\Omega(\lg{n}/\lg{\lg{n}})$. In 2004, \Patrascu and Demaine gave us  the first $\Omega(\lg{n})$ lower bounds for dynamic data structure problems~\cite{PD2006:Low-Bounds}. 
Their technique is based on information theoretic arguments which also form the basis for the work we present in this paper. \Patrascu and Demaine also presented ideas which allowed them to express more refined lower bounds such as trade-offs between updates and queries of dynamic data structures.  For a list of data structure problems and their lower bounds using these and related techniques, see for example~\cite{Pat2008:Thesis}.  More recently, a further breakthrough was made by Larsen who showed lower bounds of roughly $\Omega((\lg{n}/\lg{\lg{n}})^2)$ time per operation for dynamic weighted range counting problem and polynomial evaluation~\cite{Larsen:2012,Larsen:2012:focs}. These lower bounds remain the state of the art for any dynamic structure problem to this day.


\subsection{Our Results}

\paragraph{The lop-sided information transfer technique}

In the standard formulation of the information transfer technique of Demaine and \Patrascu~\cite{PD2004:Partial-sums}, adjacent time intervals are considered and the \emph{information} that is transferred from the operations in one interval of power of two length $\ell$ to the next interval of the same length is studied. It is sufficient to prove that a lower bound can be given for the information transfer which applies to all consecutive intervals of power of two length.  Conceptually a balanced tree on $n$ leaves over the time axis is constructed which is known as the information transfer tree.  An internal node $v$ is associated with the times $t_0$, $t_1$ and $t_2$ such that the two intervals $[t_0,t_1]$ and $[t_1+1,t_2]$ span the left subtree and the right subtree of $v$, respectively. By summing over the information transfer at each node $v$ the final lower bound is derived. 

In order to show a cell probe lower bound for bit streams, we will need to give lower bounds for the information transferred from intervals of time $[t_0,t_1]$ to later intervals $[t_2,t_3]$ which are shorter than the first intervals.  In particular we will want to argue about intervals of length $\ell \lg{\ell}$ and $\ell$ respectively. However, making the interval lengths lop-sided requires us to abandon the information transfer tree and indeed almost all of the previous proof technique.   We will instead place gaps in time between the end of one interval and the start of another and argue carefully both that not too much of the information can be lost in these gaps and that we can still sum the information transfer over a sufficient number of distinct interval lengths without too much double counting. 

Our hope is that this new technique will lead to a new class of cell probe lower bounds which could not be proved with existing methods.

\paragraph{Online convolution}

In the online convolution problem we are given a fixed vector $F \in \{0,1\}^n$ and a stream that arrives one bit at a time.  After each bit arrives we must output the inner product between $F$ and a vector formed from the most recent $n$-length suffix of the stream.  At the heart of our lower bound we need to show that we can recover $\Omega(\ell \lg \ell)$ bits of information of a contiguous subarray  of the stream of length $\ell \lg \ell$. However, because the inputs are binary and to avoid a trivial lower bound, we must do so using only $\ell$ outputs. As we need to achieve this for a large number distinct values of $\ell$ simultaneously, the fixed vector $F$ we design for the online convolution problem has a carefully designed recursive structure.  However even having overcome this hurdle, there is still the challenge of showing that for each separate value $\ell$, the information from successive outputs does not have too large an overlap.   A complete resolution to this question seems non-trivial  but in Section~\ref{sec:conv} we present a new combinatorial conjecture related to Toeplitz matrices, which if true, provides the desired lower bound. As a result we get the following lower bound for online convolution in a bit stream:

\begin{thm}[\textbf{Online convolution}]
    \label{thm:conv}
    Assuming Conjecture~\ref{con:entropy}, in the cell-probe model with $w\in\Omega(\lg n)$ bits per cell, for any randomised algorithm solving the online convolution problem on binary inputs there exist instances such that the expected amortised time per arriving value is
    \[
      \Omega{\left(\frac{\lg^2 n}{w\cdot \lg \lg n}\right)}.
    \]
\end{thm}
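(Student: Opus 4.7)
The plan is to apply the lop-sided information transfer technique to an adversarially constructed pattern $F \in \{0,1\}^n$ together with a uniformly random binary stream. I would choose a geometric sequence of level lengths $\ell_j = 2^{cj}$ for $c = \lceil\lg\lg n\rceil$ and $j=1,\ldots,L$ with $L = \Theta(\lg n / \lg\lg n)$, so that the ratio $\ell_{j+1}/\ell_j = 2^c = \Omega(\lg n)$ is large enough to control interactions between different levels. The fixed vector $F$ would be built with a hierarchical structure, dedicating a distinct contiguous block of $F$ to each level $\ell_j$; this block serves as a ``convolution encoder'' at that level, with its coefficients chosen so that the associated Toeplitz submatrix is of the type addressed by Conjecture~\ref{con:entropy}. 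After $F$ is placed in the algorithm's memory, the stream arrives bit by bit and at each arrival the algorithm outputs the inner product of $F$ with the current $n$-length suffix.

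For each level $\ell = \ell_j$ and each stream position $t$ of the form $k\ell_j$, define a gadget whose output interval $[t,t+\ell-1]$ has length $\ell$ and whose input interval $[t-\ell\lg\ell,\,t-1]$ has length $\ell\lg\ell$. The core quantitative step is to show that, conditioned on all stream bits outside the input interval (and on $F$), the $\ell$ convolution outputs during $[t,t+\ell-1]$ jointly carry $\Omega(\ell\lg\ell)$ bits of Shannon information about the $\ell\lg\ell$ random bits in the input interval. This is exactly the statement that Conjecture~\ref{con:entropy} would supply, applied to the Toeplitz submatrix determined by the level-$\ell_j$ block of $F$. A standard encoder/decoder argument of the kind used in~\cite{PD2006:Low-Bounds}, applied to the output interval, then forces the algorithm to perform $\Omega(\ell\lg\ell/w)$ cell probes during $[t,t+\ell-1]$ that read cells last written during the input interval. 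Since output intervals at a fixed level are disjoint (spaced by $\ell$), the $\lfloor n/\ell_j\rfloor$ gadgets at level $\ell_j$ contribute $\Omega(n\lg\ell_j/w)$ probes in total; no double counting occurs within a single level, because each cell probe happens at one time and thus lies in at most one output interval.

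To combine levels I would use the geometric gap $\ell_{j+1}/\ell_j = 2^c$: a probe counted at level $\ell_j$ must read a cell written within the last $\ell_j\lg\ell_j$ arrivals, a window much shorter than $\ell_{j+1}$, so each probe is associated with at most $O(1)$ levels. Summing then gives an expected total of
\[
  \Omega\!\left(\sum_{j=1}^{L}\frac{n\lg\ell_j}{w}\right) \;=\; \Omega\!\left(\frac{ncL^2}{w}\right) \;=\; \Omega\!\left(\frac{n\lg^2 n}{w\lg\lg n}\right)
\]
cell probes, and dividing by the $n$ arrivals yields the claimed amortised bound. The hardest step will be the per-level information-transfer argument: making the reduction from the entropy of $\ell$ binary convolution outputs to the clean combinatorial statement of Conjecture~\ref{con:entropy} fully rigorous, and choosing the recursive structure of $F$ so that compatible Toeplitz-submatrix encoders are present at every level without destructive cross-level interference. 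A secondary but also delicate difficulty is making the non-double-counting argument between levels quantitative; this is where the gap buffers inherent in the lop-sided framework, together with conditioning on stream bits outside the current input interval, should absorb the interactions with only a constant-factor loss.
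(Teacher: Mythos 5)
Your high-level plan matches the paper's in spirit---a fixed $F$ with a dedicated block per level, a geometric set of levels, a lop-sided input interval of length $\ell\lg\ell$ feeding an output interval of length $\ell$, and Conjecture~\ref{con:entropy} supplying the per-level entropy bound. However, you have omitted the single ingredient that makes the paper's argument go through, and the place where you omit it is exactly where your sketch is vague.

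You place the output interval $[t,t+\ell-1]$ immediately after the input interval $[t-\ell\lg\ell,\,t-1]$ with no separation, and you then claim that ``each probe is associated with at most $O(1)$ levels'' because the lookback window $\ell_j\lg\ell_j$ at level $j$ is much shorter than $\ell_{j+1}$. This claim is not correct. Fix a probe at time $\tau$ reading a cell $c$ last written at time $\tau'$. At level $j$, the containing output interval starts at $t_j=\lfloor\tau/\ell_j\rfloor\ell_j$, and one can check that the same probe of $c$ can be the counted probe at two levels $j<j'$ whenever $t_j-t_{j'}<\ell_j\lg\ell_j$, which (using $\ell_j\mid\ell_{j'}$) is equivalent to $\tau\bmod\ell_{j'}<\ell_j\lg\ell_j$. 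With your choice $\ell_{j'}=\ell_j\cdot 2^{\lceil\lg\lg n\rceil}\approx\ell_j\lg n$ and $\ell_j$ polynomial in $n$, this condition holds for a constant fraction of all $\tau$, so a probe can in principle be counted at $\Theta(L)$ levels, and an adversarial algorithm can concentrate its probes at such times. That collapses the final sum $\sum_{j}\sum_{t}|\mathcal{I}_{\ell,t}|$ from a lower bound on actual probes to something that overcounts by a factor of up to $L=\Theta(\lg n/\lg\lg n)$, which is precisely the factor you were hoping to gain. The paper avoids this by inserting a \emph{gap} of length $4\ell/\lg n$ between the two intervals, taking consecutive level lengths to differ by a factor of $(\lg n)^2$ (not $\lg n$), and showing that the gap for level $\ell'$ then spans the entire first-interval$+$gap$+$second-interval triple for the preceding $\ell$; Lemma~\ref{lem:over-counting-length} uses exactly this nesting to show that no probe is ever counted at two different levels.

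Introducing the gap also creates a new problem that your proposal does not have to face but the real proof does: information about the input interval can now leave the scene through cells touched during the gap, so the entropy argument only bounds $\vit+\git$ rather than $\vit$ alone. The paper spends all of Section~3.3 (the ``quick gaps'' machinery, Lemma~\ref{lem:quick}) proving that one can always choose an offset for the gadget positions so that the gaps collectively absorb only a small fraction of the algorithm's probing budget. Your sketch has no analogue of this, and it is not optional once the gap is there. In short: your overall decomposition and appeal to the Toeplitz conjecture are on target, but the no-double-counting step as stated is false, and the machinery needed to repair it (gaps of length $4\ell/\lg n$, level ratio $(\lg n)^2$, the quick-gap offset argument, and the $\git$ accounting in Lemma~\ref{lem:H-upper-old}) is the actual technical content of the proof.
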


\paragraph{Online pattern matching with address errors ($L_2$-rearrangement distance)}

As our second example we give an explicit distance function for which we can now obtain the first unconditional online lower bound for symbol size $\delta = 1$.   We consider a problem setting known as pattern matching with address errors and in particular the $L_2$-rearrangement distance as defined in~\cite{AABLLPSV:2009}.  Consider two strings two strings $S_1$ and $S_2$  both of length $n$ where $S_2$ is a permutation of $S_1$. Now consider the set of permutations $\Pi$ so that for all $\pi \in \Pi$, $S_1[\pi(0), \dots, \pi(n-1)] = S_2$.  The $L_2$-rearrangement distance is defined to be $\min_{\pi\in\Pi}\sum_{j=0}^{n-1}|j-\pi(j)|^2$.  If $\Pi$ is empty, that is $S_2$ is in fact not a permutation of $S_1$, then the $L_2$-rearrangement distance is defined to be $\infty$. When considered as an offline problem, for a text of length $2n$ and a pattern of length $n$, Amir et al.\@ showed that the  $L_2$-rearrangement distance between the pattern and every $n$-length substring of the text could be computed in $O(n\lg {n})$ time~\cite{AABLLPSV:2009}. In the online $L_2$-rearrangement problem we are given a fixed pattern $F \in \{0,1\}^n$ and the stream arrives one bit at a time. After each bit arrives we must output the $L_2$-rearrangement distance between $F$ and the most recent $n$-length suffix of the stream. 

As before we need to recover $\Omega(\ell \lg \ell)$ bits of information of a contiguous sub-array  of the stream of length $\ell \lg \ell$.  Our technique allows us to recover $\Omega(\lg n)$ distinct bits of stream from each output. This is achieved by constructing $F$ and carefully choosing a highly structured random input distribution for the incoming stream in such a way that the contributions to the output from different regions of the stream have different magnitudes. We can then use the result to extract distinct information about the stream from different parts of each output.

Using this approach we get the following cell probe lower bound:

\begin{thm}[\textbf{Online $L_2$-rearrangement}]
    \label{thm:L2}
    In the cell-probe model with $w\in\Omega(\lg n)$ bits per cell, for any randomised algorithm solving the online $L_2$-rearrangement distance problem on binary inputs there exist instances such that the expected amortised time per arriving value is
    \[
      \Omega{\left(\frac{\lg^2 n}{w\cdot \lg \lg n}\right)}.
    \]
\end{thm}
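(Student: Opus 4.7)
The plan is to instantiate the lop-sided information transfer framework by designing a fixed pattern $F\in\{0,1\}^n$ together with a random distribution on the arriving stream so that each output of the $L_2$-rearrangement distance reveals $\Omega(\lg n)$ distinct bits of the stream. Once this ``information-rich output'' property is in place, the framework converts it into the claimed $\Omega(\lg^2 n/(w\lg\lg n))$ amortised cell-probe bound.

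First I would construct $F$ with an explicit multi-scale structure. Partition the $n$ positions into $\Theta(\lg n)$ scales, where scale $i$ consists of disjoint blocks of length $\Theta(2^i)$, and at each scale place a single $1$ inside each block at a fixed prescribed location. In the random input distribution, each block at every scale independently receives a single $1$ placed uniformly at random inside a designated sub-window. This guarantees that the multisets of symbols of $F$ and each $n$-length suffix of the stream agree so the $L_2$-rearrangement distance is finite, while also ensuring that specifying the stream bits is equivalent to specifying, for each block, the position of its $1$.

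Next, using a short exchange argument, I would show that the optimum permutation $\pi$ is always the within-block matching: pairing a $1$ in $F$ with the $1$ in a different block is strictly dominated because squared distance grows faster than any savings from cross-matching. Thus
\[
\min_{\pi}\sum_{j=0}^{n-1}|j-\pi(j)|^2 \;=\; \sum_{i}\sum_{b} d_{i,b}^2,
\]
where $d_{i,b}$ is the offset between the random $1$ in block $b$ at scale $i$ and the prescribed $1$ in the same block of $F$. By tuning the block sizes so that the possible values of $\sum_b d_{i,b}^2$ at scale $i$ fit in an interval of magnitude strictly smaller than the smallest nonzero contribution at scale $i+1$, the scalar output can be parsed scale-by-scale to recover one $\Omega(1)$-bit displacement per scale, yielding $\Omega(\lg n)$ fresh stream bits per query.

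Feeding this into the lop-sided framework, fix $\ell$ a power of $2$ and consider an earlier interval $I_1$ of length $\ell \lg\ell$ separated by an appropriate gap from a later interval $I_2$ of $\ell$ queries. The $\ell$ outputs in $I_2$ jointly determine $\Omega(\ell\lg\ell)$ bits of the stream lying in $I_1$, because each output contributes $\Omega(\lg n)\geq \Omega(\lg\ell)$ distinct bits and the queries can be arranged to address disjoint positions of $I_1$. A standard cell-probe encoding argument then forces $\Omega(\ell\lg\ell/w)$ probes in the gap and $I_2$, and summing over $\Theta(\lg n/\lg\lg n)$ well-spaced scales $\ell$ (where the spacing is chosen so the stream positions interrogated at different scales are essentially disjoint, avoiding the double-counting pitfall flagged in the introduction) and amortising over the $n$ arriving bits yields the claimed bound. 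The main obstacle is the design step: simultaneously arranging $F$ and the input distribution so that (i) the optimal $\pi$ is rigidly within-block, (ii) the squared displacements from different scales are magnitude-separated so that the single scalar output can be demultiplexed into one bit per scale, and (iii) different outputs in $I_2$ touch genuinely disjoint positions of $I_1$ so that the $\Omega(\ell\lg\ell)$ bits extracted across $\ell$ queries really are distinct; this is the part where the lop-sided structure of the framework is forced to do non-trivial work.
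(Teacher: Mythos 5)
Your high-level skeleton -- lop-sided information transfer on $\Theta(\lg n/\lg\lg n)$ well-spaced scales $\ell$, a multi-scale gadget inside $F$, and a scalar output whose differently-weighted summands can be demultiplexed into $\Theta(\lg n)$ fresh stream bits -- matches the paper's strategy. But the concrete construction you sketch has gaps exactly at the three obstacles you name at the end, and those obstacles are not incidental: they are where essentially all of the paper's work sits.

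First, finiteness is not addressed: for the $L_2$-rearrangement distance to be defined, every $n$-length suffix of the stream must be a permutation of $F$, but with a single $1$ placed uniformly in a sub-window per block, the moment the stream shifts by one position the block boundaries no longer align and the counts of ones can mismatch. The paper handles this by drawing the stream from $\{0101,1010\}^{n/4}$ (every aligned $4$-block has exactly two ones, two zeros, and one random bit), padding $F$ with repeats of $1001$, and evaluating only at offsets divisible by $4$ (note the $t\bmod 4=0$ restriction in Lemma~\ref{lem:L2-H-lower} and ``every fourth substring'' in $F_\ell\odot U_\ell$). Second, your magnitude-separation claim ignores the cost of rearranging the zeros: when a single $1$ in a block is displaced by $d$, the zeros it passes over are each dragged by one, contributing an extra linear $|d|$ term (and worse across block boundaries), which pollutes the clean separation of $\sum d_{i,b}^2$ by scale that you need in order to read one bit per scale off the output. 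The paper sidesteps this by construction: each padding block $1001$ costs exactly $2$ against either $1010$ or $0101$ irrespective of $U_\ell$, and the gadget $F_\ell^j = 10^{2^j+3}1^{\ell/4-1}0^{\ell/4-(2^j+3)}$ is tuned so the contribution of the target $4$-block is exactly $v_j\cdot 2^{j+1} + 2^{2j} + 2$ (Fact~\ref{lem:contU}), which demultiplexes bit-by-bit. Third, you invoke disjointness of the positions interrogated by successive queries, but the paper in fact uses an \emph{iterative} recovery (Lemma~\ref{lem:half} and the paragraph after Lemma~\ref{lem:fromD}): output $k$ recovers chunk $k$ of each $U_\ell^{2j}$ only after chunks $0,\dots,k-1$ are known; and the rigid within-block matching you want is not a simple exchange argument but a consequence of the order-preserving structure of the optimal $L_2$-rearrangement (Lemma~3.1 of Amir et al.). As written, your proposal names the hard parts rather than resolving them, and the specific random model you choose (uniform single-$1$ placement) does not appear to survive the zero-cost and finiteness accounting.
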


\section{Lop-sided information transfer}\label{sec:preliminaries}

We will define the concept of information transfer, a particular set of cells probed by the algorithm, and explain how a bound on the size of the information transfer can be used when proving the overall lower bounds of Theorems~\ref{thm:conv} and~\ref{thm:L2}. All logarithms are in base two.


\subsection{Notation for the online problems}
The define some notation for our online problems.
There is a \emph{fixed~array}
$F\in\{0,1\}^n$ of length $n$ and an array $S\in\{0,1\}^n$ of length $n$, referred
to as the \emph{stream}.
We maintain $S$ subject to an update operation $\arrive(x)$ which takes a value $x\in \{0,1\}$, modifies~$S$ by appending $x$ to the right of the rightmost component $\S[n-1]$ and removing the leftmost component $S[0]$, and then outputs the inner product of $F$ and $S$, that is
$\sum_{i\in[n]}(F[i]\cdot S[i])$, or alternatively the $L_2$-rearrangement distance, depending on which problem we are currently considering.

We let $U\in \{0,1\}^n$ denote the \emph{update array} which describes a sequence of  $n$ $\arrive$ operations. That is, for each $t\in[n]$, the operation $\arrive(U[t])$ is performed.
We will usually refer to $t$ as the \emph{arrival} of the value $U[t]$.
Observe that just after arrival $t$, the values $U[t+1,n-1]$ are still not known to the algorithm. Finally, we let the $n$-length array $A$ denote the outputs such that for $t\in [n]$, $A[t]$ is the output of $\arrive(U[t])$.


\subsection{Hard distributions} \label{sec:hard-distributions}

Our lower bound holds for any randomised algorithm on its worst case
input. This will be achieved by applying \emph{Yao's
minimax principle}~\cite{Yao1977:Minimax}. We develop a
lower bound that
holds for any deterministic~algorithm on some random~input. The basic approach is as follows: we devise a fixed array $F$ and describe a probability
distribution for $U$, the $n$ new values arriving in the stream~$S$.
We then obtain a lower bound on the expected
running time over the arrivals in $U$ that holds for any deterministic algorithm.
Due to the minimax principle, the
same lower bound must then also hold for any randomised~algorithm on its own
worst~case input. The amortised bound per arriving value is obtained by dividing by $n$.

From this point onwards we consider an arbitrary deterministic algorithm
running with some fixed array $F$ on a random input of $n$ values. The algorithm may depend on~$F$. We refer to the choice of $F$ and distribution for $U$ as a \emph{hard distribution} since it used to show a lower bound.


\subsection{Two intervals and a gap} \label{sec:intervals}

In order to define the concept of information transfer from one interval of arriving values in the stream to another interval of arriving values, we first define the set $\intlens$ which contains the interval lengths that we will consider. The purpose of the next few definitions will be clear once we define information transfer in the next section. We let
\[
  \intlens =
    \Set{
      n^{1/4} \cdot (\lg n)^{2i}
      ~~|~~
      i \in \Big\{ 0,1,2,\dots, \frac {\lg n} {4 \lg \lg n} \Big\}
    }.
\]
To avoid cluttering the presentation with floors and ceilings, we assume throughout that the value of $n$ is such that any division or power nicely yields an integer. Whenever it is impossible to obtain an integer we assume that suitable floors or ceilings are used. In particular, $\intlens$ contains only integers.

For $\ell\in\intlens$ and $t\in[n/2]$ we define the following four values:
\begin{align*}
  \inta &= t,\\
  \intb &= \inta + \ell \lg \ell - 1,\\
  \intc &= \intb + \frac {4\ell} {\lg n} + 1,\\
  \intd &= \intc + \ell - 1.
\end{align*}
The values $\inta$, $\intb$, $\intc$ and~$\intd$ are indeed functions of $\ell$ and $t$ but for brevity we will often write just $\inta$ instead of $\inta(\ell,t)$, and so on, whenever the parameters $\ell$ and $t$ are obvious from context.
The four values define the three intervals
$[\inta,\intb]$, $[\intb+1,\intc-1]$ and $[\intc,\intd]$, referred to as the \emph{first interval}, the \emph{gap} and the \emph{second interval}, respectively.
Whenever there is a risk of ambiguity of which parameters the intervals are based on, we may specify what $\ell$ and $t$ they are associated with.
Before we explain the purpose of the intervals we will highlight some of their properties.

First observe that the intervals are disjoint and the gap has length $4\ell/\lg n$. The first interval has length $\ell\lg \ell$ and starts at $t$, where $t$ is always in the first half of the interval $[0,n-1]$. The second interval has length $\ell$, hence is a log-factor shorter than the first interval. All intervals are contained in $[0,n-1]$. To see this we need to verify that $\intd\leq n-1$ for any choice of $\ell\in\intlens$ and $t\in[n/2]$.
The largest value in $\intlens$ is
\[
  n^{1/4} \cdot (\lg n)^{\frac {2\lg n} {4 \lg \lg n}} = n^{3/4}.
\]
Thus, the largest possible value of $\intd$, obtained with $\ell=n^{3/4}$ and $t=n/2-1$, is
\begin{align*}
 \left( \frac n 2 - 1 \right)
 + \left( n^{3/4}\cdot \lg n^{3/4} - 1 \right)
 &+ \left( \frac {4n^{3/4}} {\lg n} + 1 \right)
 + \left( n^{3/4} - 1 \right)\\
 &\leq~
 \frac n 2 + 1 + n^{3/4} ( \lg n^{3/4} + 5 )
 ~<~
 n,
\end{align*}
whenever $n$ is sufficiently large.

Lastly, suppose that $\ell'\in\intlens$ is one size larger than $\ell\in\intlens$, that is $\ell'=\ell\cdot(\lg n)^2$.
For $\ell'$ the length of the gap is $4\ell'/\lg n$, which so big that it spans the length of both intervals plus the gap associated with $\ell$. To see this, observe that for sufficiently large~$n$,
\begin{align*}
  \intd(\ell,t)-\inta(\ell,t)+1
  &~=~
  \ell\lg\ell - 1 + \frac{4\ell}{\lg n} + 1 + \ell -1 + 1\\
  &~\leq~ 2\ell\lg n + \frac{4\ell}{\lg n}
  ~\leq~ 2\ell\lg n + \frac{(\lg n)^2\ell}{\lg n}
  ~=~ 3\ell\lg n
  ~\leq~ \frac{4\ell'}{\lg n}.
\end{align*}

\subsection{Information transfer over gaps} \label{sec:information-transfer}

Towards the definition of information transfer, we define, for $\ell\in\intlens$ and $t\in[n/2]$, 
the subarray $\Xv=U[\inta,\dots, \intb]$ to represent the $\ell\lg\ell$ values
arriving in the stream during the first interval.
We define the subarray $\Yv=A[\intc,\dots, \intd]$ to represent the $\ell$
outputs during the second interval.
Lastly we define $\Xvknown$ to be the concatenation of $U[0,(\inta-1)]$
and $U[(\intb+1),(n-1)]$. That is, $\Xvknown$ contains all values
of $U$ except for those in $\Xv$.

For $\ell\in\intlens$ and $t\in[n/2]$ we first define the \emph{information transfer to the gap}, denoted $\gitset$, to be the set of memory cells $c$ such that $c$ is probed during the first interval $[\inta,\intb]$ of arriving values and also probed during the arrivals of the values $U[\intb+1,\intc-1]$ in the gap.
Similarly we define the information transfer to the second interval, or simply \emph{the information transfer}, denoted $\vitset$, to be the set of memory cells $c$ such that $c$ is probed during the first interval $[\inta,\intb]$ of arriving symbols and also probed during the arrivals of symbols in the second interval $[\intc,\intd]$ but \emph{not} in the gap. That is, any cell $c\in\gitset$ cannot also be contained in the information transfer $\vitset$.

The cells in
the information transfer $\vitset$ may contain
information about the values in $\Xv$ that the algorithm
uses in order to correctly produce the outputs $\Yv$.
However, since cells that are probed in the gap are not included in the information transfer, the information transfer might not contain all the information about the values in $\Xv$ that the algorithm uses while outputting $\Yv$.
In all previous work on lower bounds where the information transfer technique is used, the two intervals had no gap between them, hence the information transfer $\vitset$ contained \emph{all} the information about the updates in $\Xv$ necessary for computing $\Yv$. 
Further, in all previous work, the two intervals always had the same length.
We will see that the introduction of a gap and skewed interval lengths enable us to provide non-trivial lower bounds for small inputs with large cell sizes.
We will see that the gap is small enough that a sufficiently large portion of the information about $\Xv$ has to be fetched from cells in the information transfer $\vitset$.

Since cells in the information transfer are by definition probed at some point by the algorithm, we can use $\vitset$ to measure, or at least lower bound, the number of cell probes.
As a shorthand we let $\vit=|\vitset|$ denote the size of the information transfer~$\vitset$.
Similarly we let $\git=|\gitset|$ denote the size of the information transfer to the gap.
By adding up the sizes $\vit$ of the information transfers over all $\ell\in\intlens$ and certain values of $t\in[n/2]$, we get a lower bound on the total number of cells probed by the algorithm during the $n$ arriving values in $U$.
The choice of the values $t$ is crucial as we do not want to over-count the number of cell probes. In the next two lemmas we will deal with the potential danger of over-counting.

For a cell $c\in\vitset$, we write \emph{the probe of $c$ with respect to $\vitset$} to refer to the first probe of $c$ during the arrivals in the second interval. These are the probes of the cells in the information transfer that we count.

\begin{lem}
  \label{lem:over-counting-arrivals}
  For any $\ell\in\intlens$ and $t,t'\in[n/2]$ such that $|t-t'|\geq\ell$, if a cell $c$ is in both $\vitset$ and $\vitsetonlytprime$ then the probe of $c$ with respect to $\vitset$ and the probe of $c$ with respect $\vitsetprime$ are distinct.
\end{lem}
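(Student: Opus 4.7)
The plan is to observe that the statement reduces to showing the two relevant second intervals are disjoint under the hypothesis $|t-t'|\geq\ell$. By definition, the ``probe of $c$ with respect to $\vitset$'' occurs at some arrival time inside $[\intc(\ell,t),\intd(\ell,t)]$, and the probe of $c$ with respect to $\vitsetprime$ occurs inside $[\intc(\ell,t'),\intd(\ell,t')]$. So if these two time intervals are disjoint, the two probes happen at strictly different arrivals and are therefore distinct as events in the algorithm's execution, regardless of whether they access the same cell.

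To establish disjointness, assume without loss of generality that $t'\geq t+\ell$. Using the closed-form expressions from Section~\ref{sec:intervals}, we have $\intc(\ell,t')-\intc(\ell,t)=t'-t\geq \ell$, since both $\intc$ values are obtained from their respective $t$ by adding the same offset $\ell\lg\ell+4\ell/\lg n$. On the other hand, the second interval associated with $(\ell,t)$ has length exactly $\ell$, so $\intd(\ell,t)=\intc(\ell,t)+\ell-1<\intc(\ell,t)+\ell\leq \intc(\ell,t')$. This shows $[\intc(\ell,t),\intd(\ell,t)]$ lies strictly to the left of $[\intc(\ell,t'),\intd(\ell,t')]$, giving disjointness.

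Putting these together, the two probes of $c$ occur at arrivals in disjoint time intervals, so they must be distinct probes, which is exactly the claim. There is no real obstacle here; the only thing to be careful about is that ``the probe of $c$ with respect to $\vitset$'' was defined as the \emph{first} probe of $c$ during the second interval of $(\ell,t)$, so it is unambiguously a single probe, and the disjointness argument above guarantees that the analogously-defined first probe for $(\ell,t')$ cannot coincide with it in time. The argument uses nothing beyond the interval definitions and is symmetric in $t$ and $t'$.
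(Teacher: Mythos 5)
Your proof is correct and takes the same route as the paper: both reduce the claim to disjointness of the two second intervals and then conclude that probes occurring at different arrival times are distinct. The paper states the disjointness in one sentence while you spell out the arithmetic ($\intc(\ell,t')-\intc(\ell,t)=t'-t\geq\ell$ while the second interval has length $\ell$), but the argument is identical in substance.
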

\begin{proof}
  Since $t$ and $t'$ are at least $\ell$ apart, the second intervals associated with $t$ and $t'$, respectively, must be disjoint.
  Hence the probe of $c$ with respect to $\vitset$ and the probe of $c$ with respect $\vitsetonlytprime$ must be distinct.
\end{proof}

From the previous lemma we know that there is no risk of over-counting cell probes of a cell over information transfers $\vitset$ under a fixed value of $\ell\in\intlens$, as long as no two values of $t$ are closer than~$\ell$.
In the next lemma we consider information transfers under different values of $\ell\in\intlens$.

\begin{lem}
  \label{lem:over-counting-length}
  For any $\ell,\ell'\in\intlens$ such that $\ell\neq\ell'$, and any $t,t'\in[n/2]$, if a cell $c$ is in both $\vitset$ and $\vitsetprime$ then the probe of $c$ with respect to $\vitset$ and the probe of $c$ with respect $\vitsetprime$ must be distinct.
\end{lem}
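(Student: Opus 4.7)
The plan is to argue by contradiction. Suppose the two probes coincide, at time $\tau \in [\intc(\ell,t),\intd(\ell,t)] \cap [\intc(\ell',t'),\intd(\ell',t')]$. Without loss of generality I will assume $\ell < \ell'$, so that $\ell' \geq \ell\cdot(\lg n)^2$. Because $c\in \vitset$, there is some earlier probe of $c$ in the first interval associated with $(\ell,t)$, say at time $\sigma\in[\inta(\ell,t),\intb(\ell,t)]$. Note immediately that $\sigma \leq \intb(\ell,t) < \intc(\ell,t) \leq \tau$.

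The heart of the argument is to locate $\sigma$ with respect to the structure of $(\ell',t')$. I will use the calculation carried out just before the lemma, namely $\intd(\ell,t) - \inta(\ell,t) + 1 \leq 3\ell\lg n \leq 4\ell'/\lg n$, to upper bound $\tau - \sigma$ by the full span of $(\ell,t)$. Since $\intc(\ell',t') = \intb(\ell',t') + 4\ell'/\lg n + 1$, this gives
\[
\sigma \;\geq\; \tau - \tfrac{4\ell'}{\lg n} + 1 \;\geq\; \intc(\ell',t') - \tfrac{4\ell'}{\lg n} + 1 \;=\; \intb(\ell',t') + 2,
\]
while of course $\sigma < \tau \leq \intd(\ell',t')$. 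So $\sigma$ falls either in the gap $[\intb(\ell',t')+1,\intc(\ell',t')-1]$ or in the second interval $[\intc(\ell',t'),\intd(\ell',t')]$ of $(\ell',t')$.

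Both cases are quickly dispatched. If $\sigma$ lies in the gap of $(\ell',t')$, then $c$ is probed in that gap, contradicting the defining property of $c\in\vitsetprime$ (cells in the information transfer are explicitly required not to be probed in the gap). If $\sigma$ lies in the second interval of $(\ell',t')$, then $\sigma < \tau$ exhibits a probe of $c$ in $[\intc(\ell',t'),\intd(\ell',t')]$ strictly earlier than $\tau$, contradicting $\tau$ being the first such probe and hence its identification as the probe with respect to $\vitsetprime$.

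The only non-routine step is the bookkeeping that ensures the inequality $3\ell\lg n \leq 4\ell'/\lg n$ actually places $\sigma$ past $\intb(\ell',t')$; this is exactly the use the setup makes of the fact that consecutive lengths in $\intlens$ differ by a factor of $(\lg n)^2$, and it is what forces the gap length $4\ell/\lg n$ to scale with $\ell$ rather than being a fixed quantity. Once that bound is in hand, the case split on where $\sigma$ sits inside the $(\ell',t')$ configuration mechanically yields the contradiction, and no further appeal to the algorithm's behaviour is required.
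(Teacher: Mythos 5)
Your proof is correct and takes essentially the same route as the paper: assume the two designated probes coincide at $\tau$, take WLOG $\ell<\ell'$, observe that the gap of $(\ell',t')$ is large enough to swallow the entire $(\ell,t)$ configuration so that the earlier probe $\sigma$ in the first interval of $(\ell,t)$ must land either in the gap of $(\ell',t')$ (contradicting $c\in\vitsetprime$) or in the second interval of $(\ell',t')$ before $\tau$ (contradicting $\tau$ being the first such probe). You simply make the arithmetic locating $\sigma$ explicit where the paper states it qualitatively.
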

\begin{proof}
  Let $p$ be the probe of $c$ with respect to $\vitset$, and let $p'$ be the probe of $c$ with respect $\vitsetprime$. We will show that $p\neq p'$. Suppose without loss of generality that $\ell<\ell'$.
  From the properties of the intervals that were given in the previous section we know that the length of the gap associated with $\ell'$ is larger than the sum of lengths of the first interval, the gap and the second interval associated with $\ell$.
  
   Suppose for contradiction that $p=p'$. By definition of $\vitset$, the cell $c$ is probed also in the first interval associated with $\ell$. Let $p_\textup{first}$ denote any such cell probe.
   Because the gap associated with $\ell'$ is so large, $p_\textup{first}$ must take place either in the second interval or the gap associated with $\ell'$. If $p_\textup{first}$ is in the gap, then $c$ cannot be in $\vitsetprime$. If $p_\textup{first}$ is in the second interval then $p'$ cannot equal $p$.
\end{proof}

In order to lower bound the total number of cell probes performed by the algorithm over the $n$ arrivals in $U$ we will define, for each $\ell\in\intlens$, a set $\arrivalset_\ell\subseteq [n/2]$ of arrivals, such that for any distinct $t,t'\in\arrivalset_\ell$, $|t-t'|\geq\ell$.
It then follows from Lemmas~\ref{lem:over-counting-arrivals} and~\ref{lem:over-counting-length} that
\[
      \sum_{\ell\in\intlens}
      \sum_{t\in\arrivalset_\ell}
      \vit
\]
is a lower bound on the number of cell probes. Our goal is to lower bound the expected value of this double-sum.
The exact definition of $\arrivalset_\ell$ will be given in Section~\ref{sec:lower-it} once we have introduced relevant notation.

\section{Proving the lower bound}\label{sec:proofs}

In this section we give the overall proof for the lower bounds of Theorems~\ref{thm:conv} and~\ref{thm:L2}.
Let $\ell\in\intlens$ and let $t\in[n/2]$.
Suppose that $\Xvknown$ is fixed but the
values in $\Xv$ are drawn at random in accordance with the distribution
for $U$, conditioned on the fixed value of $\Xvknown$.
This induces a distribution for the outputs $\Yv$.
We want to show that if the entropy of $\Yv$ is
large, conditioned on the fixed $\Xvknown$, then
the information transfer $\vitset$ is large, since only the variation in the inputs $\Xv$ can alter the outputs $\Yv$.
We will soon make this claim more precise.

\subsection{Upper bound on the entropy}

We write $H(\Yv\mid\Xvknown=\Xvfix)$ to denote the entropy of $\Yv$ conditioned on fixed $\Xvknown$.
Towards showing that high conditional entropy  $H(\Yv\mid\Xvknown=\Xvfix)$ implies large information transfer we use the information transfer $\vitset$ and the information transfer to the gap, $\gitset$, to describe an encoding of the outputs $\Yv$. The following lemma gives a direct relationship between $\vit+\git$ and the entropy.
A marginally simpler version of the lemma, stated with different notation, was first given in~\cite{PD2006:Low-Bounds} under the absence of gaps.

\begin{lem}
\label{lem:H-upper-old}
Under the assumption that the address of any cell can be specified in $w$ bits,
for any $\ell\in\intlens$ and $t\in[n/2]$, the entropy 
$$
H(\Yv\mid\Xvknown=\Xvfix)~\leq~2w + 2w\cdot \expected{\vit +\git \mid \Xvknown=\Xvfix}.
$$
\end{lem}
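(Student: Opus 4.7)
The plan is to construct, for each realisation of $\Xv$ (with $\Xvknown=\Xvfix$ held fixed), a prefix-free binary description of $\Yv$ whose length is deterministically $2w+2w\cdot(\vit+\git)$. Shannon's source-coding inequality then bounds $H(\Yv\mid\Xvknown=\Xvfix)$ by the expected length of this code, which is exactly the right-hand side of the lemma. This is the classical information-transfer encoding of \Patrascu and Demaine, adapted here to allow for the gap between the two intervals by using both $\vitset$ and $\gitset$ rather than just one set.

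The encoder first runs the (deterministic) algorithm on the true input to identify $\vitset$ and $\gitset$. It then emits the two integers $\vit$ and $\git$, each written in exactly $w$ bits; this is legal because the address-space hypothesis bounds the number of cells, and hence $\vit$ and $\git$, by $2^w$. After this length prefix, for each cell $c\in\vitset\cup\gitset$ it writes the address of $c$ (in $w$ bits) followed by the content of $c$ immediately after arrival $\intb$ (in $w$ bits). The total codeword length is exactly $2w+2w(\vit+\git)$, so taking conditional expectation over $\Xv$ yields the claimed bound.

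The decoder, knowing $F$, the algorithm, and $\Xvfix$, first replays the algorithm on $U[0,\inta-1]$ to reproduce the exact memory state at time $\inta-1$. It then parses the encoding and overwrites the listed cells with the supplied contents, obtaining a patched snapshot. Treating the patched snapshot as the state at time $\intb$, it resumes the algorithm on the known updates $U[\intb+1,\intd]$ (both the gap and the second interval are subsequences of $\Xvknown$) and emits the outputs produced during $[\intc,\intd]$ as its reconstruction of $\Yv$.

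The main obstacle, and the only part that is not routine, is proving that every cell read during $[\intb+1,\intd]$ returns its true value in this simulation. I would argue this by splitting on three cases for the cell $c$ being read. If $c$ was never probed during $[\inta,\intb]$, then its true content at time $\intb$ is unchanged from time $\inta-1$, which is what the unpatched stage-one snapshot already provides. If $c$ was probed in the first interval and this is its first read in $[\intb+1,\intd]$, then by the definitions $c$ lies in $\gitset$ (if the read is in the gap) or in $\vitset$ (if the read is in the second interval and $c$ was not touched in the gap), so the patched content is the correct time-$\intb$ value. If $c$ has already been written during $[\intb+1,\intd]$, then the simulated and true executions have applied the same writes since $\intb$ and therefore agree. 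The delicate point to verify is that $\vitset$ and $\gitset$ together catch exactly those cells whose time-$\intb$ content is needed by the continuation, with no omissions introduced by the gap; once this case analysis is in hand the rest is bookkeeping.
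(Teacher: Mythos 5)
Your proposal is correct and follows essentially the same encoding/decoding argument as the paper: store the sizes $\vit,\git$ plus the addresses and end-of-first-interval contents of every cell in $\vitset\cup\gitset$, then replay the algorithm on the fixed $\Xvknown$ while substituting those stored contents on first access. Your ``patched-snapshot'' phrasing and the paper's ``look up each read in the encoding'' phrasing are functionally identical, and your three-case analysis is exactly the justification the paper leaves implicit.
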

\begin{proof}
  The expected length of any encoding of $\Yv$ under fixed $\Xvknown$ is an upper bound on the conditional entropy of $\Yv$.
  We use the information transfer $\vitset$ and the information transfer to the gap, $\gitset$, to define an encoding of $\Yv$ in the following way. For every cell $c\in\vitset\cup\gitset$ we store the address of $c$, which takes at most $w$ bits under the assumption that a cell can hold the address of any cell in memory.
  We also store the contents of $c$ that it holds at the very end of the first interval, just before the beginning of the gap. The contents of $c$ is specified with $w$ bits.
  In total this requires $2w\cdot (\vit+\git)$ bits.

  We will use the algorithm, which is fixed, and the fixed values $\Xvfix$ of $\Xvknown$ as part of the decoder to obtain $\Yv$ from the encoding. Since the encoding is of variable length we also store the size $\vit$ of the information transfer and the size $\git$ of the information transfer to the gap. This requires at most $2w$ additional bits.

  In order to prove that the described encoding of $\Yv$ is valid we now describe how to decode it.
  First we simulate the algorithm on the fixed input $\Xvknown$ from the first arrival $U[0]$ until just before the first interval when the first value in $\Xv$ arrives.
  We then skip over all inputs in $\Xv$
  and resume simulating the algorithm from the beginning of the gap, that is when the value $U[\intb+1]$ arrives.
  We simulate the algorithm over the arrivals in the gap and the second interval until all values in $\Yv$ have been outputted.
  For every cell being read, we check if it is contained in either the information transfer $\vitset$ or the information transfer to the gap $\gitset$ by looking up its address in the encoding.
  If the address is found then the contents of the cell is fetched from the encoding. If not, its contents is available from simulating the algorithm on the fixed inputs $\Xvknown$.
\end{proof}

\subsection{Lower bounds on entropy}

Lemma~\ref{lem:H-upper-old} above provides a direct way to obtain a lower bound on the expected value of $\vit+\git$ if given a lower bound on the conditional entropy $H(\Yv\mid\Xvknown=\Xvfix)$.
In the next two lemmas we provide such entropy lower bounds for $L_2$-rearrangement distance and convolution.

\begin{lem}
  \label{lem:conv-H-lower}
  Assuming Conjecture~\ref{con:entropy}, for the convolution problem there exists a real constant $\entropy>0$ and, for any $n$, a fixed array $F\in \{0,1\}^n$ such that for all $\ell\in\intlens$ and $t\in[n/2]$, when $U$ is chosen uniformly at random from $\{0,1\}^n$ then
  \[
    H(\Yv\mid\Xvknown=\Xvfix)~\geq~ \entropy\cdot \ell\cdot \lg n,
  \]
  for any fixed $\Xvfix$.
\end{lem}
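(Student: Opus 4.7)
The plan is to reduce $H(\Yv \mid \Xvknown = \Xvfix)$ to a question about the entropy of a binary Toeplitz linear map and then to invoke Conjecture~\ref{con:entropy}. First I would expand each output in the second interval. For $j \in [0,\ell-1]$, the output at arrival $\intc + j$ is
\[
A[\intc + j] \;=\; \sum_{i=0}^{n-1} F[i] \cdot U[\intc + j - n + 1 + i].
\]
Because $\ell \leq n^{3/4}$ and $t \leq n/2$, the length-$n$ sliding window ending at arrival $\intc + j$ entirely contains the first interval $[\inta,\intb]$ for every $j \in [0,\ell-1]$, so the sum splits into a part that depends only on $\Xvknown$ and a part of the form $\sum_{m=0}^{\ell\lg\ell - 1} F[\Delta_j + m]\cdot \Xv[m]$, where $\Delta_j := n - 1 + \inta - \intc - j$ decreases by $1$ as $j$ increases by $1$.

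Stacking these $\ell$ rows yields an $\ell \times (\ell\lg\ell)$ matrix $M$ whose rows are consecutive length-$\ell\lg\ell$ substrings of $F$; by construction $M$ is a binary Toeplitz matrix. A short calculation using the bounds on $\ell$, $t$ and $j$ shows that $\Delta_j \geq 0$ and $\Delta_j + \ell\lg\ell - 1 \leq n-1$, so every entry of $M$ is a genuine bit of $F$. Collecting the $\ell$ outputs into a single vector identity gives $\Yv = c(\Xvfix) + M\,\Xv$, where the offset $c(\Xvfix)$ is completely determined by the conditioning. Since translation does not change Shannon entropy,
\[
H(\Yv \mid \Xvknown = \Xvfix) \;=\; H(M\,\Xv),
\]
with $\Xv$ uniform on $\{0,1\}^{\ell\lg\ell}$.

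At this point the lemma reduces to the combinatorial assertion that a single $F \in \{0,1\}^n$ can be chosen so that, for every $\ell \in \intlens$ and every admissible starting offset $\Delta_j$, the resulting $\ell \times (\ell\lg\ell)$ Toeplitz slice $M$ of $F$ satisfies $H(M\,\Xv) \geq \entropy \cdot \ell \lg n$ for an absolute constant $\entropy > 0$. This is precisely the content of Conjecture~\ref{con:entropy}; taking the $F$ whose existence it promises gives the bound uniformly over all $\ell \in \intlens$, all $t \in [n/2]$, and all fixed $\Xvfix$.

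The main obstacle is not the reduction, which is routine once one spots the Toeplitz structure, but the simultaneity demanded of the single fixed $F$. Each of the $\ell$ outputs is an integer in $[0,\ell\lg\ell]$, so the information-theoretic ceiling on the entropy of $\ell$ such outputs is $\ell\lg(\ell\lg\ell + 1) = \Theta(\ell \lg n)$; the required bound $\entropy \cdot \ell \lg n$ sits within a constant factor of this maximum. Establishing this near-maximal entropy concentration for all Toeplitz slices of one array $F$, across all scales $\ell \in \intlens$, is exactly the combinatorial difficulty that Conjecture~\ref{con:entropy} packages, and the present lemma is content to defer to it.
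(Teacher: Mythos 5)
Your reduction of the conditional entropy to $H(M\,\Xv)$ for a Toeplitz matrix $M$ built from substrings of $F$ is correct, and it is the same reduction the paper uses. You also correctly observe that the offset $c(\Xvfix)$ is absorbed by the conditioning, so entropy is preserved. One useful observation you should make explicit: $\Delta_j = n - 1 - \ell\lg\ell - 4\ell/\lg n - j$, so the matrix $M$ depends only on $\ell$ (and $n$), not on $t$ — this is exactly why the paper needs only one designated subarray $F_\ell$ per scale $\ell$, rather than one per $(\ell,t)$ pair.

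However, there is a genuine gap in the final step where you say the simultaneity claim ``is precisely the content of Conjecture~\ref{con:entropy}.'' It is not. Conjecture~\ref{con:entropy} is a per-scale existence statement: for each single height $h$, there exists \emph{some} $(0,1)$-Toeplitz matrix of width $\alpha\,h\lg h$ whose product with a uniform $\{0,1\}$ vector has entropy at least $\gamma\,h\lg h$. It says nothing about a single string $F$ realising such matrices at all scales simultaneously, and it gives a matrix of width $\alpha\,\ell\lg\ell$, not the full width $\ell\lg\ell$ that your $M$ has. The paper fills both gaps with two pieces of construction that you skip over. First, Lemma~\ref{lem:subarray} upgrades each conjectured matrix to width $\ell\lg\ell$ by padding in arbitrary extra columns and arguing that conditional entropy given the padded coordinates (and hence unconditional entropy) remains $\geq\gamma\,\ell\lg\ell$; the resulting Toeplitz matrix is then encoded as a length-$(\ell\lg\ell+\ell-1)$ string $F_\ell$. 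Second, the array $F$ is built by placing each $F_\ell$ at distance $4\ell/\lg n + 1$ from the right end of $F$; the argument at the end of Section~\ref{sec:intervals} (that the gap at scale $\ell'$ dominates the whole span at scale $\ell$) is what guarantees the $F_\ell$ do not overlap, so one string can carry all scales at once. These two steps are the substance of the lemma, and they do not follow ``by taking the $F$ whose existence [the conjecture] promises'' — the conjecture promises no $F$ at all.
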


\begin{lem}
  \label{lem:L2-H-lower}
  For the $L_2$-rearrangement distance problem there exists a real constant $\entropy>0$ and, for any $n$, a fixed array $F\in \{0,1\}^n$ such that for all $\ell\in\intlens$ and all $t\in[n/2]$ such that $t \bmod 4 =0$, when $U$ is chosen uniformly at random from $\{0101,1010\}^{\frac{n}{4}}$ then
  \[
    H(\Yv\mid\Xvknown=\Xvfix)~\geq~ \entropy\cdot \ell\cdot \lg n,
  \]
  for any fixed $\Xvfix$.
\end{lem}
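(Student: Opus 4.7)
The plan is to exhibit a fixed array $F\in\{0,1\}^n$ such that, under the given distribution on $U$, the $\ell$ outputs in the second interval jointly encode $\Omega(\ell\lg n)$ bits about $\Xv$. Observe first that when $U\sim\{0101,1010\}^{n/4}$, each 4-block of $U$ contributes exactly one independent uniformly random ``phase'' bit (recording which of the two patterns is used), so because $\inta\bmod 4=0$ and $\ell\lg\ell$ is a multiple of $4$, $\Xv$ decomposes into $\ell\lg\ell/4$ independent phase bits. Moreover the 1-count in each prefix of $U$ is, up to the residue of the prefix length modulo $4$, deterministic, so by choosing $F$ with a matching 1-count the $L_2$-rearrangement distance is finite for the relevant $t$ and equals $\sum_i(p_i-q_i(t))^2$, where $p_1<\cdots<p_k$ and $q_1(t)<\cdots<q_k(t)$ are the sorted 1-positions of $F$ and $S_t$, respectively.

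First I would construct $F$ whose 1-positions are organised into $\Theta(\lg n)$ widely separated ``zones'', chosen so that under the sorted matching each zone contributes to $A[t]$ an amount whose magnitude sits at a distinct power of some fixed integer base $B$. Because the sorted matching decomposes additively across clusters of 1s separated by long runs of zeros, and because the contributions from different zones live at non-overlapping scales, each single output $A[t]$ can be ``demultiplexed'' (by reductions modulo successive powers of $B$ and subtraction of lower-order readings) into $\Theta(\lg n)$ separate numerical readings, one per zone. Each reading reports a scalar functional of precisely those phase bits of $\Xv$ that lie in the portion of the window aligning with that zone at time $t$.

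Next I would arrange the zones so that, as $t$ sweeps through the $\ell$ times of the second interval, the shifted image of the first interval meets the zones of $F$ in a controlled way, and the resulting $\ell\cdot\Theta(\lg n)$ readings collectively determine a set $B_\star$ of $\Omega(\ell\lg n)$ distinct phase bits of $\Xv$; roughly, each of the $\ell$ values of $t$ should pinpoint $\Theta(\lg n)$ new phase bits given the previously determined bits and $\Xvknown$. Because the phase bits in $\Xv$ are independent and uniform given $\Xvknown=\Xvfix$, we have $H(B_\star\mid\Xvknown=\Xvfix)=|B_\star|=\Omega(\ell\lg n)$, and because $B_\star$ is a deterministic function of $(\Yv,\Xvknown)$ by construction, this yields $H(\Yv\mid\Xvknown=\Xvfix)\geq H(B_\star\mid\Xvknown=\Xvfix)=\Omega(\ell\lg n)$, as required.

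The main obstacle is that a \emph{single} $F$ must simultaneously realise this multi-scale zone decomposition for every $\ell\in\intlens$. At each of the $\Theta(\lg n/\lg\lg n)$ scales the zones of $F$ must be positioned so that, as $t$ varies through the scale-dependent second interval, the readings address disjoint phase bits of $\Xv$, and contributions from other scales must either cancel or be absorbed into a baseline value computable from $\Xvknown$. The natural resolution is a recursive construction in which the zones at scale $\ell$ are properly nested inside those at the next larger scale $\ell\cdot(\lg n)^2$, with $B$ chosen large enough that magnitudes at one scale do not overflow into the magnitude window reserved for any other, so that the decoding at each scale reads only from its own block of digits of $A[t]$.
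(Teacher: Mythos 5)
Your proposal is on essentially the same track as the paper's proof, and all the structural ingredients match. You identify that each 4-block of $U\in\{0101,1010\}^{n/4}$ contributes one independent phase bit; that the $L_2$-rearrangement distance reduces, via the sorted-matching characterisation (\emph{cf.}\ Lemma 3.1 of~\cite{AAB-2006}), to $\sum_i (p_i-q_i)^2$; that one should build $F$ from $\Theta(\lg n)$ scale-separated ``zones'' so that each output $A[t]$ demultiplexes into $\Theta(\lg n)$ numerical readings addressing distinct phase bits; and that these zones must be arranged recursively so the construction works simultaneously for every $\ell\in\intlens$. This is exactly the skeleton of the paper's argument (the paper's $\Fsub{\ell}{j}=10^{(2^j+3)}1^{(\ell/4-1)}0^{(\ell/4-(2^j+3))}$ realises your ``magnitude at a distinct power of $B$'' with $B=2$; the shift $2^j$ forces the second one to move an amount that deposits $v_j$ in bit $j+1$ of $D^\star$).

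Two details in your sketch are imprecise in ways that matter and are handled differently (and more carefully) in the paper. First, you propose separating clusters of $1$s with ``long runs of zeros,'' but this breaks the key locality property you then invoke: the sorted matching only decomposes additively across regions when the prefix $1$-counts of $F$ and of the aligned stream agree at region boundaries. Since the stream has exactly two $1$s per $4$-block, $F$ must have the same density; the paper therefore pads with repeats of $1001$ (and $01$), which keeps every prefix $0/1$-balanced and forces (by Lemma 3.1 of~\cite{AAB-2006}) each element of $U_\ell$ to be matched inside $F_\ell$. Without this, part of $U_\ell$ would cross into neighbouring regions and your ``baseline absorbed from $\Xvknown$'' step fails, since the cross-boundary cost then depends on the unknown bits. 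Second, the paper does not nest the scale-$\ell$ gadget inside the scale-$\ell'$ gadget; it places the $F_\ell$ as pairwise disjoint subarrays stacked near the right end of $F$ (the gap at scale $\ell'$ exceeds the total footprint of the scale-$\ell$ block plus gap), so no digit-overflow argument between scales is needed at all. Your nested layout might be salvageable with a magnitude argument, but the disjoint layout is cleaner and is what makes the cross-scale subtraction trivial. Apart from these corrections, your outline is the paper's proof.
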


The proof of Lemma~\ref{lem:conv-H-lower} is deferred to Section~\ref{sec:conv} and hinges on a conjecture relating to Toeplitz matrices. The proof of Lemma~\ref{lem:L2-H-lower} is deferred to Section~\ref{sec:L2}. 

Before we proceed with the lower bound on the information transfer we make a short remark on the bounds that these lemmas give.
Observe that the maximum conditional
entropy of $\Yv$ is bounded by the entropy of $\Xv$, which is
$O(\ell\lg \ell)$ since the length of the first interval is $\ell \lg \ell$.
Recall also that the values in $\intlens$ range from $n^{1/4}$ to $n^{3/4}$.
Thus, for a constant~$\entropy$, both entropy lower bounds are tight up to a multiplicative constant factor.

\subsection{A lower bound on the information transfer and quick gaps}
\label{sec:lower-it}

In this section we prove our main lower bound results. In order to fix ideas we do this for the $L_2$-rearrangement problem, given in Theorem~\ref{thm:L2}. As a result we assume that $\entropy$ is the constant and $F$ is the fixed array of Lemma~\ref{lem:conv-H-lower}, and that $U$ is chosen uniformly at random from $\{0,1\}^{n}$. Theorem~\ref{thm:conv} which gives our main lower bound result for online convolution follows via exactly the same argument but with Lemma~\ref{lem:conv-H-lower} replaced by Lemma~\ref{lem:L2-H-lower}. 

By combining the upper and lower bounds on the conditional entropy from Lemmas~\ref{lem:H-upper-old} and~\ref{lem:L2-H-lower} we have that there is a hard distribution and a real constant $\entropy>0$ for the convolution problem such that
\[
\expected{\vit +\git \mid \Xvknown=\Xvfix}
~\geq~
\frac {\entropy\cdot \ell\cdot \lg n} {2w} - 1
\]
for any $\Xvfix$.
We may remove the conditioning by taking expectation over $\Xvknown$ under random $U$. Thus,
\begin{equation*}
  \expected{\vit +\git}
  ~\geq~
  \frac {\entropy\cdot \ell\cdot \lg n} {2w} - 1,
\end{equation*}
or equivalently
\begin{equation}
  \label{eq:vitgit}
  \expected{\vit}
  ~\geq~
  \frac {\entropy\cdot \ell\cdot \lg n} {2w} - 1 - \expected{\git}.
\end{equation}
Recall that our goal is to lower bound
\[
      \expecteddisplay{
	\sum_{\ell\in\intlens}
	\sum_{t\in\arrivalset_\ell}
	\vit
      }
      ~=~
      \sum_{\ell\in\intlens}
      \sum_{t\in\arrivalset_\ell}
      \expect{\vit},
\]
where is $\arrivalset_\ell$ contains suitable values of $t$.
Using Inequality~(\ref{eq:vitgit}) above would immediately provide such a lower bound, however, there is an imminent risk that the negative terms of $\expected{\git}$ could devalue such a bound into something trivially small.
Now, for this to happen, the algorithm must perform sufficiently many cell probes in the gap.
Since the length of the gap is considerably shorter than the second interval, a cap on the worst-case number of cell probes per arriving value would certainly ensure that $\expected{\git}$ stays small, but as we want an amortised lower bound we need something more refined.
The answer lies in how we define $\arrivalset_\ell$.
We discuss this next.

For $\ell\in\intlens$ and $f\in [\ell]$ we first define the set $\arrivalset_{\ell,f}\subseteq [n]$ of arrivals to be
\[
  \arrivalset_{\ell,f} =
  \Set{
    f+i\ell
    ~~|~~
    \textup{$i\in\{0,1,2,\dots\}$ and $f+i\ell\leq \frac n 2$}
  }.
\]
The values in $\arrivalset_{\ell,f}$ are evenly spread out with distance $\ell$, starting at $f$. We may think of $f$ as the offset of the sequence of values in $\arrivalset_{\ell,f}$. The largest value in the set is no more than $n/2$.
We will define the set $\arrivalset_\ell$ to equal a subset of one of the sets $\arrivalset_{\ell,f}$ for some $f$.
More precisely, we will show that there must exist an offset $f$ such that at least half of the values $t\in\arrivalset_{\ell,f}$ have the property that the time spent in the gap associated with $\ell$ and $t$ is small enough to ensure that the information transfer to the gap is small.
We begin with some definitions.

\begin{defn}[\textbf{Quick gaps and sets}]
  For any $\ell\in\intlens$ and $t\in[n/2]$ we say that the gap associated with $\ell$ and $t$ is \emph{quick} if the expected number of cell probes during the arrivals in the gap is no more than
  \[
    \frac {\entropy\cdot \ell\cdot \lg n} {4w},
  \]
  where $\entropy$ is the constant from Lemma~\ref{lem:conv-H-lower}.
  Further, for any $f\in[\ell]$ we say that the set $\arrivalset_{\ell,f}$ is \emph{quick} if, for at least half of all $t\in\arrivalset_{\ell,f}$, the gap associated with $\ell$ and $t$ is quick.
\end{defn}

The next lemma says that for sufficiently fast algorithms there is always an offset $f$ such that $\arrivalset_{\ell,f}$ is quick.

\begin{lem}
  \label{lem:quick}
  Suppose that the expected total number of cell probes over the $n$ arrivals in $U$ is less than
  \[
    \frac {\entropy\cdot n\cdot \lg^2 n} {32w}.
  \]
  Then, for any $\ell\in\intlens$, there is an $f\in[\ell]$ such that $\arrivalset_{\ell,f}$ is quick.
\end{lem}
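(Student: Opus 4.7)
The plan is to combine an averaging argument over the $\ell$ possible offsets $f \in [\ell]$ with Markov's inequality applied within the chosen offset class. Throughout, let $P(\ell, t)$ denote the expected number of cell probes performed during the arrivals in the gap associated with $(\ell, t)$, and let $T$ denote the expected total number of cell probes over all $n$ arrivals.

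First I would observe that within a single offset class $\arrivalset_{\ell, f}$ the successive values of $t$ differ by exactly $\ell$, whereas each gap has length only $4\ell/\lg n$, so the gap intervals associated with distinct $t \in \arrivalset_{\ell, f}$ are pairwise disjoint. This is the feature that prevents over-counting \emph{within} one offset class. Summing across all offsets, each $t \in [0, n/2]$ lies in exactly one set $\arrivalset_{\ell, f}$ (namely $f = t \bmod \ell$), giving the identity
\[
  \sum_{f \in [\ell]} \sum_{t \in \arrivalset_{\ell, f}} P(\ell, t) ~=~ \sum_{t \in [0, n/2]} P(\ell, t).
\]

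Next I would bound the right-hand side by the global probe budget. A fixed arrival position can lie in the gap of $(\ell, t)$ for at most $4\ell/\lg n$ distinct values of $t$, namely those $t$ satisfying $p - \ell\lg\ell - 4\ell/\lg n + 1 \leq t \leq p - \ell\lg\ell$. By linearity of expectation this yields $\sum_{t \in [0, n/2]} P(\ell, t) \leq (4\ell/\lg n) \cdot T$. Averaging over the $\ell$ offsets then produces some $f^\star \in [\ell]$ with
\[
  \sum_{t \in \arrivalset_{\ell, f^\star}} P(\ell, t) ~\leq~ \frac{4T}{\lg n}.
\]

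Finally I would apply Markov's inequality to this sum. Any non-quick $t$ contributes more than $\entropy \ell \lg n / (4w)$ to it, so the number of non-quick $t$'s in $\arrivalset_{\ell, f^\star}$ is bounded above by a constant multiple of $n/\ell$; plugging in the hypothesised bound $T < \entropy n \lg^2 n / (32 w)$ drives this count strictly below $|\arrivalset_{\ell, f^\star}|/2$. Consequently at least half of the arrivals in $\arrivalset_{\ell, f^\star}$ have quick gaps, so $\arrivalset_{\ell, f^\star}$ itself is quick and the lemma holds with $f = f^\star$.

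\paragraph{Main obstacle.} The only delicate point is the double-counting step: each arrival position is covered by roughly $4\ell/\lg n$ different gaps, and it is the product of this sharing factor with the $1/\ell$ from offset-averaging that produces the crucial $1/\lg n$ slack in the bound on $\sum P(\ell, t)$. This slack is precisely what permits the per-gap quickness threshold $\entropy \ell \lg n / (4w)$ to coexist with the global probe budget $\entropy n \lg^2 n / (32 w)$. Matching the constants carefully enough to obtain ``at least half'' (rather than merely ``at least one'') quick gaps requires some bookkeeping, but no further combinatorial ingredient beyond the disjointness of gaps within a fixed offset class.
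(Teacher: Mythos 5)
Your proposal follows essentially the same path as the paper: the paper packages the ``small coverage'' fact by constructing the explicit set $\calF$ of $\lg n/4$ offsets spaced $4\ell/\lg n$ apart, so that the gap unions $R_f$ for $f\in\calF$ are pairwise disjoint and $\sum_{f\in\calF} P_f$ is bounded by the global probe count; your coverage-multiplicity observation (each arrival lies in the gap of at most $4\ell/\lg n$ pairs $(\ell,t)$) is the dual way of saying the same thing and, after averaging over the $\ell$ offset classes, yields the identical per-offset bound $\sum_{t\in\arrivalset_{\ell,f^\star}}\expect{P(\ell,t)}\leq 4T/\lg n$. The Markov step then matches what the paper implicitly does via contradiction.

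One caution: the final arithmetic claim does not quite hold as you state it. With $T<\entropy n\lg^2 n/(32w)$ and the threshold $\entropy\ell\lg n/(4w)$, the count of non-quick $t$'s in $\arrivalset_{\ell,f^\star}$ is bounded by $16Tw/(\entropy\ell\lg^2 n)<n/(2\ell)$, and $|\arrivalset_{\ell,f^\star}|\approx n/(2\ell)$, so you obtain strictly less than $|\arrivalset_{\ell,f^\star}|$ non-quick arrivals, not strictly less than $|\arrivalset_{\ell,f^\star}|/2$; the hypothesis constant would need to be $64w$ for ``at least half'' to go through. This factor-of-two slip is in fact also present in the paper's own proof (the line $\frac{n/2}{2\ell}\cdot\frac{\entropy\ell\lg n}{4w}=\frac{\entropy n\lg n}{8w}$ should read $\frac{\entropy n\lg n}{16w}$), so it is a shared bookkeeping error that is harmless for the asymptotic theorems, but you should not assert that the stated constant already gives the half without checking.
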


\begin{proof}
  In accordance with the lemma, suppose that the expected total number of cell probes over the $n$ arrivals in $U$ is less than
  $\entropy n (\lg^2 n) / (32w)$.
  For contradiction, suppose that there is no $f\in [\ell]$ such that $\arrivalset_{\ell,f}$ is quick.
  We will show that the expected number of cell probes over the $n$ arrivals must then be at least $\entropy n (\lg^2 n) / (32w)$.
  
  For any $f\in[\ell]$, let $R_f\subseteq [n]$ be the union of all arrivals that belong to a gap associated with $\ell$ and any $t\in\arrivalset_{\ell,f}$.
  Let $P_f$ be the number of cell probes performed by the algorithm over the arrivals in $R_f$.  
  Thus,
  for any set $\arrivalset_{\ell,f}$ that is \emph{not quick} we have by linearity of expectation
  \[
    \expect{P_f}
    ~\geq~
    \frac {|\arrivalset_{\ell,f}|} 2 \cdot
	\frac {\entropy\cdot \ell\cdot \lg n} {4w}
    ~=~
    \frac {n/2} {2\ell} \cdot
	\frac {\entropy\cdot \ell\cdot \lg n} {4w}
    ~=~
    \frac {\entropy\cdot n\cdot \lg n} {8w}.
  \]
  
  Let the set of offsets
  \[
    \calF = 
      \Set{
	i\cdot \frac {4\ell}{\lg n}
	~~|~~
	i \in \left[\frac {\lg n} 4 \right]
      }
      ~\subseteq~
      [\ell].
  \]
  The values in $\calF$ are spread out with distance $4\ell/\lg n$, which equals the gap length.
  Thus, for any two distinct $f,f'\in\calF$, the sets $R_f$ and $R_{f'}$ are disjoint.
  We therefore have that the total running time over all $n$ arrivals in $U$ must be lower bounded by $\sum_{f\in\calF} P_f$.
  Under the assumption that no $\arrivalset_{\ell,f}$ is quick, we have that the expected total running time is at least
  \[
    \expecteddisplay{ \sum_{f\in\calF} P_f }
    ~=~
    \sum_{f\in\calF} \expect{P_f}
    ~\geq~
    |\calF|\cdot \frac {\entropy\cdot n\cdot \lg n} {8w}
    ~=~
    \frac {\lg n} {4} \cdot \frac {\entropy\cdot n\cdot \lg n} {8w}
    ~=~
    \frac {\entropy\cdot n\cdot \lg^2 n} {32w},
  \]
  which is the contradiction we wanted.
  Thus, under the assumption that the running time over the $n$ arrivals in $U$ is less than
  $\entropy n (\lg^2 n) / (32w)$
  there must be an $f\in [\ell]$ such that $\arrivalset_{\ell,f}$ is quick.
\end{proof}

We now proceed under the assumption that the expected running time over the $n$ arrivals in $U$ is less than $\entropy n (\lg^2 n) / (32w)$. If this is not the case then we have already established the lower bound of Theorem~\ref{thm:conv}.

Let $f$ be a value in $[\ell]$ such that $\arrivalset_{\ell,f}$ is a quick set. Such an $f$ exists due to Lemma~\ref{lem:quick}.
We now let $\arrivalset_\ell\subseteq\arrivalset_{\ell,f}$ be the set of all $t\in\arrivalset_{\ell,f}$ for which the gap associated with $\ell$ and $t$ is quick. Hence
$|\arrivalset_\ell|\geq |\arrivalset_{\ell,f}|/2=n/(4\ell)$.
Since $\git$ cannot be larger than the number of cell probes in the gap, we have by the definition of a quick gap that for any $t\in\arrivalset_\ell$,
\[
  \expect{\git}
  \leq
  \frac {\entropy\cdot \ell\cdot \lg n} {4w}. 
\]
Using the above inequality together with Inequality~(\ref{eq:vitgit}) we can finally provide a non-trivial lower bound on the sum of the information transfers:
\begin{align*}
    \sum_{\ell\in\intlens}
    \sum_{t\in\arrivalset_\ell}
    \expect{\vit}
  &~\geq~
    \sum_{\ell\in\intlens}
    \sum_{t\in\arrivalset_\ell}
    \left(
      \frac {\entropy\cdot \ell\cdot \lg n} {2w} - 1 - \expected{\git}
    \right)\\
  &~\geq~
    \sum_{\ell\in\intlens}
    \sum_{t\in\arrivalset_\ell}
    \left(
      \frac {\entropy\cdot \ell\cdot \lg n} {2w} - 1
	- \frac {\entropy\cdot \ell\cdot \lg n} {4w}
    \right)\\
  &~\geq~
    \frac {\entropy\cdot \lg n} {5w}
    \sum_{\ell\in\intlens}
    \sum_{t\in\arrivalset_\ell}
      \ell
    ~\geq~
    \frac {\entropy\cdot \lg n} {5w}
    \sum_{\ell\in\intlens}
      \left(|\arrivalset_\ell|\cdot \ell\right)
    ~\geq~
    \frac {\entropy\cdot \lg n} {5w}
    \sum_{\ell\in\intlens}
      \left(\frac {n} {4\ell}\cdot \ell\right)\\      
  &~=~
    \frac {\entropy\cdot n\cdot \lg n} {20w}
    \cdot |\intlens|
    ~\geq~
    \frac {\entropy\cdot n\cdot \lg n} {20w}
    \cdot \frac {\lg n} {4 \lg \lg n}
    ~\in~
    \Theta{\left(\frac {n\cdot \lg^2 n} {w\cdot \lg \lg n} \right)}.
\end{align*}
By Lemmas~\ref{lem:over-counting-arrivals} and~\ref{lem:over-counting-length} this lower bound is also a bound on the expected total number of cell probes performed by the algorithm over the $n$ arrivals in $U$.
By Yao's minimax principle, as discussed in Section~\ref{sec:hard-distributions}, this implies that any randomised algorithm on its worst case input has the same lower bound on its expected running time.
The amortised time per arriving value is obtained by dividing the running time by $n$.
This concludes the proofs of Theorem~\ref{thm:L2} and~\ref{thm:conv} .

\section{The hard distribution for convolution} \label{sec:conv}

In this section we prove Lemma~\ref{lem:conv-H-lower}, which says that assuming Conjecture~\ref{con:entropy} holds, there exists a real constant $\entropy>0$ and a fixed array $F\in \{0,1\}^n$ such that for all $\ell\in\intlens$ and $t\in[n/2]$, when $U$ is chosen uniformly at random from $\{0,1\}^n$ then the conditional entropy of the outputs $\Yv$ is
\[
  H(\Yv\mid\Xvknown=\Xvfix)~\geq~ \entropy\cdot \ell\cdot \lg n,
\]
for any fixed $\Xvfix$.
We begin by discussing the fixed array $F$.

\subsection{The array $F$} \label{sec:F}

For each $\ell\in\intlens$ there is a subarray of $F$ of length $\ell\lg\ell+\ell-1$.
Each such subarray, which we denote $F_\ell$, is at distance $4\ell/\lg n+1$ from the right-hand end of $F$, which is one more than the length of the gap associated with $\ell$.
Figure~\ref{fig:F} illustrates two subarrays $F_\ell$ and $F_{\ell'}$, where $\ell,\ell'\in\intlens$ and $\ell<\ell'$.
\begin{figure*}[t]
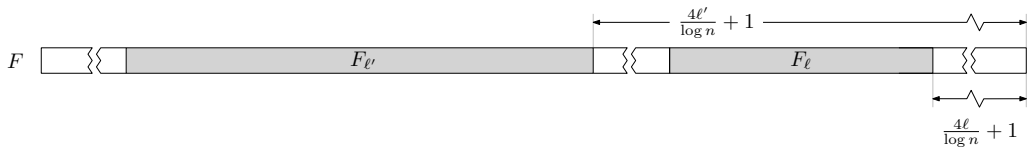

    \centering
    \insertdiagram{F}
    \caption{\label{fig:F}Two subarrays $F_\ell$ and $F_{\ell'}$ of $F$. There is never an overlap between any two such subarrays.}
\end{figure*}
By the properties discussed in Section~\ref{sec:intervals} we know that the length of the gap associated with $\ell'$ is larger than the length of $F_\ell$ plus the length of the gap associated with $\ell$. Hence there is no overlap between the subarrays $F_\ell$ and~$F_{\ell'}$.

Given any of the subarrays $F_\ell$ and an array $U_\ell$ of length $\ell\lg \ell$, we write $F_\ell\otimes U_\ell$ to denote the $\ell$-length array that consists of all inner products between $U_\ell$ and every substring of $F_\ell$. More precisely, for $i\in[\ell]$ the $i$-th component of $F_\ell\otimes U_\ell$ is
\[
  (F_\ell \otimes U_\ell)[i]
  =
  \sum_{j\in[\ell\lg\ell]}
    \big(F_\ell[i+j] \cdot U_\ell[j]\big).
\]
As Figure~\ref{fig:slide} shows we may think of $F_\ell\otimes U_\ell$ as the inner products of $U_\ell$ and its aligned subarray of $F_\ell$ as $U_\ell$ slides along $F_\ell$.
\begin{figure*}[t]
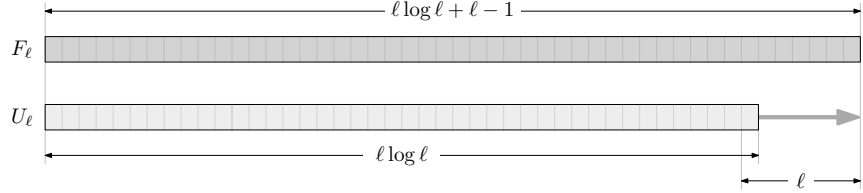

    \centering
    \insertdiagram{slide}
    \caption{\label{fig:slide}The $\ell$-length array $F_\ell\otimes U_\ell$ contains the inner products of $U_\ell$ and corresponding subarrays of $F_\ell$ as $U_\ell$ slides along $F_\ell$.}
\end{figure*}
We can show that there exist subarrays $F_\ell$ such that the entropy of $F_\ell\otimes U_\ell$ is high when $U_\ell$ is drawn uniformly at random from $\{0,1\}^{\ell\lg\ell}$.
Observe that the entropy of $F_\ell\otimes U_\ell$ is upper bounded by the entropy of $\U_\ell$, which is exactly $\ell\lg\ell$.
The proof of the next lemma is based on a conjecture that we describe shortly.

\begin{lem}
  \label{lem:subarray}
  There exists a real constant $\epsilon>0$ such that for all $n$ and $\ell\in\intlens$ there is a subarray $F_\ell$ for which the entropy of $F_\ell\otimes U_\ell$ is at least $\epsilon\cdot\ell\lg\ell$ when $U_\ell$ is drawn uniformly at random from $\{0,1\}^{\ell\lg\ell}$.
\end{lem}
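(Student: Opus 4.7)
The plan is to recognise $F_\ell \otimes U_\ell$ as a binary Toeplitz matrix-vector product and reduce the lemma directly to Conjecture~\ref{con:entropy}. For any candidate string $F_\ell \in \{0,1\}^{\ell \lg \ell + \ell - 1}$, I would define the $\ell \times (\ell \lg \ell)$ Toeplitz matrix $M_\ell$ whose $i$-th row is $(F_\ell[i], F_\ell[i+1], \ldots, F_\ell[i + \ell \lg \ell - 1])$. By the definition of the operator $\otimes$ given just above the lemma, $M_\ell U_\ell = F_\ell \otimes U_\ell$ as arrays of integers, so $H(F_\ell \otimes U_\ell) = H(M_\ell U_\ell)$. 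Conversely, the first row and last column of $M_\ell$ together contain exactly $\ell \lg \ell + \ell - 1$ bits, so every $\{0,1\}$-Toeplitz matrix of these dimensions arises from some $F_\ell$ of exactly the required length. The lemma is therefore equivalent to exhibiting, for each $\ell \in \intlens$, a binary Toeplitz matrix of shape $\ell \times (\ell \lg \ell)$ whose output entropy under the uniform distribution on $\{0,1\}^{\ell \lg \ell}$ is at least $\epsilon \cdot \ell \lg \ell$ for a constant $\epsilon > 0$ that does not depend on $\ell$.

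This existence statement is precisely the content of Conjecture~\ref{con:entropy}. My proof would therefore consist of (i) writing out the Toeplitz reformulation above, (ii) checking that the dimension range spanned by $\intlens$, namely $\ell \in [n^{1/4}, n^{3/4}]$, lies within the range to which the conjecture applies, and (iii) invoking the conjecture once per $\ell \in \intlens$ to extract the first row and leftmost column of the promised matrix as the required $F_\ell$.

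To justify the larger picture in Section~\ref{sec:F}, I would then assemble the individual $F_\ell$'s into the single fixed array $F \in \{0,1\}^n$: place each $F_\ell$ at distance $4\ell/\lg n + 1$ from the right-hand end of $F$ and fill the remaining positions with zeros. The non-overlap property has already been established in Section~\ref{sec:intervals}, where it is shown that the gap associated with any $\ell' > \ell$ in $\intlens$ is strictly longer than $|F_\ell|$ plus the gap associated with $\ell$, so these placements are consistent and leave $F_\ell \otimes U_\ell$ unaffected by the choice of filler.

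The one genuine obstacle is Conjecture~\ref{con:entropy} itself, which is a combinatorial statement about the image distribution of binary Toeplitz matrices on the Boolean cube; once that is granted, Lemma~\ref{lem:subarray} is essentially a definitional translation and involves no further probabilistic or information-theoretic work. I would not attempt to prove the conjecture here (it is the open problem flagged by the paper); the proof of the lemma is conditional on it by design.
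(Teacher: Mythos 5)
Your overall route is the same as the paper's: interpret $F_\ell \otimes U_\ell$ as a binary Toeplitz matrix--vector product and reduce the lemma to Conjecture~\ref{con:entropy}. However, you have misdescribed the conjecture. It does not exhibit a Toeplitz matrix of shape $\ell \times (\ell\lg\ell)$; it exhibits one of shape $h \times (\alpha\cdot h\lg h)$ for some $\alpha\leq 1$. When $\alpha<1$ the conjectured matrix is strictly narrower than what the lemma needs, so the statement you call ``precisely the content of Conjecture~\ref{con:entropy}'' is in fact a strengthening of it, and your sentence ``once that is granted, Lemma~\ref{lem:subarray} is essentially a definitional translation'' is not accurate as written.

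The paper closes exactly this gap with a short padding argument that your proposal omits: take the conjecture's matrix $M$ of width $\alpha\cdot\ell\lg\ell$, extend it (arbitrarily, subject to being Toeplitz) to a matrix $M_\ell$ of width $\ell\lg\ell$, and observe that when the first $\alpha\cdot\ell\lg\ell$ coordinates of $v_\ell$ are uniform and the rest are fixed the entropy of $M_\ell v_\ell$ is at least $\gamma\cdot\ell\lg\ell$ by the conjecture; then, passing to a fully uniform $v_\ell$, this quantity becomes a \emph{conditional} entropy of $M_\ell v_\ell$ (conditioned on the trailing coordinates), which lower-bounds the unconditional entropy $H(M_\ell v_\ell)$. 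Without that step, you have only produced a Toeplitz matrix whose width is a constant fraction of $\ell\lg\ell$, and you still need to argue that widening it and randomising the extra inputs does not destroy the entropy bound. The fix is three lines, but it is a genuine missing idea rather than a notational slip, since it is the only place where $\alpha$ is consumed.

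Everything else in your proposal is sound and matches the paper: the identification of the $\ell\lg\ell + \ell - 1$ bits of $F_\ell$ with the defining diagonal entries of the Toeplitz matrix is correct, the conjecture indeed applies to every $h$ so the range $\ell\in[n^{1/4},n^{3/4}]$ poses no problem, and the placement and non-overlap of the $F_\ell$'s inside $F$ is handled the same way the paper handles it (the paper fills the leftover positions ``arbitrarily'' where you choose zeros, which is a valid instance of the same choice).
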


In order to finish the description of the array $F$ we choose each subarray $F_\ell$ such that the entropy of $F_\ell\otimes U_\ell$ is at least $\epsilon\cdot\ell\lg\ell$ when $U_\ell$ is drawn uniformly at random from $\{0,1\}^{\ell\lg\ell}$, where $\epsilon$ is the constant of Lemma~\ref{lem:subarray}. Any element of $F$ that is not part of any of the subarrays $F_\ell$ is chosen arbitrarily. This concludes the description of the array $F$.

\subsection{Proving the entropy lower bound}

We are now ready to prove Lemma~\ref{lem:conv-H-lower}, that is lower bound the conditional entropy of $\Yv$.

\begin{proof}[Proof of Lemma~\ref{lem:conv-H-lower}]
  Let $F$ be the array described above and let $U$ be drawn uniformly at random from $\{0,1\}^n$. Let $\ell\in\intlens$ and $t\in[n/2]$.
  Thus, conditioned on any fixed $\Xvknown$, the distribution of $\Xv$ is uniform on $\{0,1\}^{\ell\lg\ell}$.
  
  Recall that $\Xv$ arrives in the stream between arrival $\inta$ and $\intb$, after which $4\ell/\lg n$ values arrive in the gap. Thus, at the beginning of the second interval, at arrival $\intc$, $\Xv$ is aligned with the $(\ell\lg\ell)$-length suffix of the subarray $F_\ell$ of $F$. Over the $\ell$ arrivals in the second interval, $\Xv$ slide along $F_\ell$ similarly to Figure~\ref{fig:slide} (only in reversed direction of what the diagram shows). Since all values in $\Xvknown$ are fixed, the outputs $\Yv$ uniquely specify $F_\ell\otimes\Xv$. Thus, by Lemma~\ref{lem:subarray}, the conditional entropy
  \[
    H(\Yv\mid\Xvknown=\Xvfix)
    ~\geq~
    \epsilon\cdot \ell\cdot \lg \ell,
    ~\geq~
    \frac{\epsilon}{4}\cdot \ell\cdot \lg n,
  \]
  since $\ell\geq n^{1/4}$. By setting the constant $\entropy$ to $\epsilon/4$ we have proved Lemma~\ref{lem:conv-H-lower}.
\end{proof}

The last piece remaining is the proof of Lemma~\ref{lem:subarray}. This proof is based on a conjecture which we state as follows. Recall that a \emph{Toeplitz} matrix (or upside-down \emph{Hankel} matrix) is constant on each descending diagonal from left to right. 

\newcommand{\conjdef}{There exist two positive real constants $\alpha\leq 1$ and $\gamma\leq 1$ such that for any $h$ there is a $(0,1)$-Toeplitz matrix $M$ of height $h$ and width $\alpha\cdot h\lg h$ with the property that the entropy of the product $Mv$ is at least $\gamma\cdot h\lg h$, where $v$ is a column vector of length $\alpha\cdot h\lg h$ whose elements are chosen independently and uniformly at random from $\{0,1\}$.}

\begin{conj}
  \label{con:entropy}
  \conjdef
\end{conj}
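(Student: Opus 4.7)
The plan is to establish Conjecture~\ref{con:entropy} via the probabilistic method. Pick the Toeplitz matrix $M$ at random by setting each of its $h+n-1$ diagonal values $m_{-(h-1)},\ldots,m_{n-1}$ independently and uniformly in $\{0,1\}$, where $n=\alpha\cdot h\lg h$, and show that for suitable positive constants $\alpha,\gamma$ with $\gamma<\alpha$ the random $M$ satisfies $\mathbb{E}_M[H(Mv)]\geq\gamma\cdot h\lg h$. Existence of a deterministic $M$ meeting the conjecture then follows by averaging.

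Since Shannon entropy is awkward to manipulate directly, I would lower-bound $H(Mv)$ by the collision (R\'{e}nyi-$2$) entropy. Using the identity $Mv=Mv'\Leftrightarrow M(v-v')=0$ together with the fact that, for each $w\in\{-1,0,1\}^n$, exactly $2^{n-|\mathrm{supp}(w)|}$ ordered pairs $(v,v')\in\{0,1\}^n\times\{0,1\}^n$ satisfy $v-v'=w$, a short calculation gives
\[
H(Mv)\;\geq\;H_2(Mv)\;=\;n\;-\;\lg\!\sum_{\substack{w\in\{-1,0,1\}^n\\ Mw=0}} 2^{-|\mathrm{supp}(w)|}.
\]
By Jensen's inequality applied to the expectation over $M$, it therefore suffices to prove
\[
\sum_{w\neq 0} 2^{-|\mathrm{supp}(w)|}\;\Pr_{M}[\,Mw=0\,] \;\leq\; 2^{(\alpha-\gamma)\cdot h\lg h}.
\]

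The technical heart is then to bound $\Pr_M[Mw=0]$ as a function of the combinatorial type of the support $S\subseteq[n]$ of $w$. For fixed $w$ with $|S|=k$, each coordinate $(Mw)_i=\sum_{j\in S}w_j\,m_{j-i}$ is a $\pm 1$-combination over a shifted window of the random sequence $\{m_d\}$; a single Littlewood--Offord / local-CLT estimate yields $\Pr[(Mw)_i=0]=O(1/\sqrt{k})$, but the $h$ constraints share variables through the Toeplitz structure and so cannot simply be multiplied. I would split the analysis into two regimes: (i) \emph{spread-out} $w$, where consecutive shifts of $S$ touch disjoint diagonals and the $h$ events are nearly independent, yielding $\Pr_M[Mw=0]\lesssim k^{-h/2}$; and (ii) \emph{clustered} $w$, where I would expand $\Pr_M[Mw=0]$ as a Fourier integral over $[0,1)^h$ of $\prod_d|\cos(\pi\lambda_d(\xi))|$, with $\lambda_d(\xi)$ a linear combination of $\xi_1,\ldots,\xi_h$ weighted by the coefficients of $m_d$ across the $h$ rows, and bound the resulting anti-concentration by controlling how many $\lambda_d(\xi)$'s stay bounded away from integers. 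Weighing the resulting bounds against the $\binom{n}{k}2^k$ count of $w$'s of support size $k$, together with the damping factor $2^{-k}$, should collapse the whole sum to $2^{(\alpha-\gamma)h\lg h}$ once $\alpha$ is chosen small enough.

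The main obstacle is regime (ii): controlling $\Pr_M[Mw=0]$ when the support of $w$ is tightly clustered amounts to an anti-concentration / small-ball problem for the null space of a random Toeplitz matrix, and such questions are considerably harder than their independent-entry analogues precisely because of the correlations forced by the Toeplitz structure. This is almost certainly why the authors leave the statement as a conjecture; a complete proof would likely require either a clever explicit construction of $M$ in place of the random one, or a genuinely new anti-concentration tool tailored to random structured matrices.
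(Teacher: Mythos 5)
The statement you are addressing is left \emph{unproven} in the paper: it is stated as Conjecture~\ref{con:entropy} precisely because the authors do not have a proof. So there is no argument in the paper to compare against; what you have produced is a proof \emph{plan}, not a proof, and by your own concluding paragraph it contains an unresolved gap.

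The gap is real and it is load-bearing. After the (correct) reduction from Shannon to R\'enyi-$2$ collision entropy and the Jensen step, everything hinges on bounding
$\sum_{w\neq 0} 2^{-|\mathrm{supp}(w)|}\,\Pr_M[Mw=0]$,
and the split into ``spread-out'' and ``clustered'' regimes does not close it. For a row $i$ of a Toeplitz matrix the diagonal indices hit by the support $S$ are $\{j-i : j\in S\}$, so disjointness across all $h$ rows requires that $S$ avoid any two elements at distance less than $h$. With $n=\alpha\cdot h\lg h$ that forces $|S|\le \alpha\lg h$, so regime (i) covers only an exponentially small sliver of $w$'s; virtually the entire sum lies in your regime (ii). There the Fourier/anti-concentration estimate for a \emph{random Toeplitz} null event $\{Mw=0\}$ is exactly the part you flag as hard and leave open, and there is no off-the-shelf substitute: the $h$ constraints are strongly correlated through the shared diagonals, the independent-entry analogues (Koml\'os-type singularity bounds, Littlewood--Offord over i.i.d.\ rows) do not transfer, and the known results you gesture at (Erd\H{o}s--Spencer, coin-weighing with general $(0,1)$-matrices) crucially lack the Toeplitz constraint. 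Even in regime (i), the claim $\Pr_M[Mw=0]\lesssim k^{-h/2}$ is asserted from ``near independence'' without a quantitative coupling argument and would need justification.

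In short: the architecture of your proposal (probabilistic method, collision entropy, support-size bookkeeping) is a sensible line of attack and consistent with how one might hope to settle the conjecture, but the central estimate it requires is precisely the open problem, so the proposal does not prove Conjecture~\ref{con:entropy} and cannot be credited as a proof.
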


This conjecture might at first seem surprising as the matrix is non-square. However, an essentially equivalent statement  was shown to be true for general $(0,1)$-matrices in 1974 by Erd\H{o}s and Spencer~\cite{ES:1974}.  Moreover, even stronger statements for general $(0,1)$-matrices form the basis of the well studied ``Coin Weighing Problem with a Spring Scale'' (see~\cite{Bshouty:2009} and references therein).  We will now show how to prove Lemma~\ref{lem:subarray} using the conjecture.

\begin{proof}[Proof of Lemma~\ref{lem:subarray} (assuming Conjecture~\ref{con:entropy})]
  Let $\alpha$ and $\gamma$ be the two constants in Conjecture~\ref{con:entropy}.
  Let $h=\ell$ and let $M$ be a $(0,1)$-Toeplitz matrix of height $\ell$ and width $\alpha\cdot \ell\lg \ell$ with the property of Conjecture~\ref{con:entropy}.
  
  We now define a new matrix $M_\ell$ of height $\ell$ and width $\ell\lg \ell$ such that the submatrix of $M_\ell$ that spans the first $\alpha\cdot \ell\lg \ell$ columns equals $M$.
  Remaining elements of $M_\ell$ are filled in arbitrarily from the set $\{0,1\}$ so that   $M_\ell$ becomes Toeplitz. 
  
  Let $v_\ell$ be a random column vector of length $\ell\lg \ell$ such that the first $\alpha\cdot \ell\lg \ell$ elements are chosen independently and uniformly from $\{0,1\}$. The remaining elements are fixed arbitrarily.
  By Conjecture~\ref{con:entropy} we have that the entropy of $M_\ell v_\ell$ is at least $\gamma\cdot \ell\lg \ell$.
  Thus, if we instead pick all elements of $v_\ell$ independently and uniformly at random from $\{0,1\}$ then the conditional entropy of $M_\ell v_\ell$, conditioned on all but the first $\alpha\cdot \ell\lg \ell$ elements, is at least $\gamma\cdot \ell\lg \ell$. Hence the entropy of $M_\ell v_\ell$ is also at least $\gamma\cdot \ell\lg \ell$.
  
  Now, since $M_\ell$ is Toeplitz we have that the first column and the first row of $M_\ell$ define the entire matrix. These $\ell + \ell\lg \ell - 1$ elements can be represented with an array $F_\ell$ of length $\ell + \ell\lg \ell - 1$ such that $F_\ell\otimes v_\ell=M_\ell v_\ell$, where we abuse notation slightly. In other words, the elements of the product $M_\ell v_\ell$ correspond exactly to the inner products obtained by sliding $U_\ell$, where $U_\ell$ is the array version of the column vector $v_\ell$, along $F_\ell$. Thus, the entropy of $F_\ell\otimes U_\ell$ is at least $\gamma\cdot \ell\lg \ell$.
  We may therefore set the constant $\epsilon$ in the statement of the lemma to equal the constant $\gamma$ of Conjecture~\ref{con:entropy}. This concludes the proof of Lemma~\ref{lem:subarray}.
\end{proof}

\section{The hard distribution for $L_2$-rearrangement} \label{sec:L2}

In this section we prove Lemma~\ref{lem:L2-H-lower} which says that there exists a real constant $\entropy>0$ and a fixed array $F\in \{0,1\}^n$ such that for all $\ell\in\intlens$ and all $t \in[n/4]$ such that $t \bmod 4 =0$, when $U$ is chosen uniformly at random from $\{0101,1010\}^{\frac{n}{4}}$ then the conditional entropy of the outputs $\Yv$ is
$H(\Yv\mid\Xvknown=\Xvfix)~\geq~ \entropy\cdot \ell\cdot \lg n$ for any fixed $\Xvfix$.
We begin by discussing the fixed array $F$. 

\subsection{The array $F$} \label{sec:FL2}

As for the convolution problem, for each $\ell\in\intlens$ there is a subarray of $F$ of length $\ell\lg\ell+\ell$. Each such subarray, which we denote $F_\ell$, is at distance $4\ell/\lg n+1$ from the right-hand end of $F$. This is the same high-level structure as for the convolution problem and so again, there is no overlap between the subarrays $F_\ell$ and~$F_{\ell'}$ and further, Figure~\ref{fig:F} in Section~\ref{sec:intervals} is accurate here too.

 Given any of the subarrays $F_\ell$ and an array $U_\ell$ of length $(\ell\lg \ell)$, we write $F_\ell\odot U_\ell$ to denote the $(\ell/4)$-length array that consists of the $L_2$-rearrangement distances between $U_\ell$ and every \emph{fourth} $(\ell \lg \ell)$-length substring of $F_\ell$. More precisely, for $4i\in[\ell]$, the value of $F_\ell\odot U_\ell[i]$ is the $L_2$-rearrangement distance between $F_\ell[4i,4i+\ell \lg \ell -1]$ and $U_\ell$.
 
Our main focus in this section is on proving Lemma~\ref{lem:L2subarray} which can be seen as an analogue of Lemma~\ref{lem:L2-H-lower} for a fixed length of $\ell$:

\begin{lem}
  \label{lem:L2subarray}
  There exists a real constant $\epsilon>0$ such that for all $n$ and $\ell\in\intlens$ there is a subarray $F_\ell$ for which the entropy of $F_\ell\odot U_\ell$ is at least $\epsilon\cdot\ell\lg\ell$ when $U_\ell$ is drawn uniformly at random from $\{0101,1010\}^{\frac{\ell}{4}\lg\ell}$. $F_\ell$ contains an equal number of $0$s and $1$s.
\end{lem}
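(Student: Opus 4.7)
The plan is to reduce the $L_2$-rearrangement distance to an explicit integer formula via the monotone matching of 1-positions, and then engineer $F_\ell$ from a multi-scale "gadget" construction so that each entry of $F_\ell\odot U_\ell$ encodes $\Omega(\lg\ell)$ bits of $U_\ell$ at well-separated magnitudes.

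First, since $x\mapsto x^2$ is strictly convex, the $L_2$-rearrangement distance between any two binary strings of equal weight equals $\sum_k(p_k-q_k)^2$ over the rank-ordered 1-positions. This turns every finite entry of $F_\ell\odot U_\ell$ into an explicit integer function of the $M=(\ell/4)\lg\ell$ bits $B_j\in\{0,1\}$ recording whether the $j$-th 4-block of $U_\ell$ is $0101$ or $1010$. To keep every entry finite I would require that every 4-aligned length-$(\ell\lg\ell)$ window of $F_\ell$ contains exactly $(\ell/2)\lg\ell$ ones; combined with the lemma's balance condition on $F_\ell$ itself, this is easily maintained as long as $F_\ell$ is composed of blocks each carrying $0$, $2$, or $4$ ones with the right periodicity.

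Second, I would build $F_\ell$ by interleaving $\Theta(\lg\ell)$ \emph{gadgets} of the form $(0000)^{a}(1111)^{a}$ for geometrically increasing scales $a\in\{1,2,4,\dots\}$, separated by balanced $0101/1010$ padding chosen to preserve the per-window 1-count. A direct per-gadget computation using the monotone matching shows that a single $(0000)^{a}(1111)^{a}$ gadget aligned with $U_\ell$-blocks $B_j,\dots,B_{j+2a-1}$ contributes exactly $c_a+\sum_{k=1}^{2a}(16a-8k+4)\,B_{j+k-1}$ to the total distance --- an arithmetic progression of weights whose largest entry is $\Theta(a)$, obtained from the $(x+2)^2-x^2$ identity applied at each matched pair. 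By spacing the gadgets so that the contribution from scale $a=2^r$ dominates the sum of all contributions at scales strictly smaller (a standard base-$K$ numeric argument), each output can be decoded into $\Theta(\lg\ell)$ independent integer digits, each a known linear function of the $B_j$'s; hence each output carries conditional entropy $\Omega(\lg\ell)$. Aggregating across the $\ell/4$ shifts (each of which realigns the gadgets to a different subset of the $B_j$'s, so the linear functionals read off from distinct shifts are largely non-redundant) yields joint entropy $\Omega(\ell\lg\ell)$ and the lemma follows with $\epsilon$ equal to the resulting constant. The equal count of 0s and 1s in $F_\ell$ is immediate because every gadget and every padding block has this property.

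The main obstacle is arranging the gadgets so that the three requirements --- window balance at every 4-aligned shift, overall $0/1$ balance, and true scale separation --- coexist: because gadgets of different scales must be interleaved to keep every window balanced, one must check that the weights from smaller scales never corrupt the digit structure at larger scales under any shift, and that distinct shifts do not simply re-reveal the same linear combinations of the $B_j$'s. This is the $L_2$-specific ingredient that lets the proof bypass Conjecture~\ref{con:entropy}, and I would handle it either by an explicit nested placement whose "integer-radix" spacing is large enough, or by a probabilistic construction combined with a union bound over the $\ell/4$ shifts.
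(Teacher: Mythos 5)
Your per-gadget formula is incorrect, and the error is fatal to the plan. You claim that a $(0000)^a(1111)^a$ gadget aligned with $2a$ blocks contributes $c_a+\sum_{k=1}^{2a}(16a-8k+4)B_{j+k-1}$, but this only accounts for moving the \emph{ones}. Each block also contains two \emph{zeros}, and under the monotone matching they must be sent to the left half $(0000)^a$. Working out both parts: for block $m$ (zero-indexed within the gadget) the ones contribute a differential of $16a-8m-4$ in $B_m$, and the zeros contribute $8m+4$ in $B_m$; these sum to a constant $16a$ per block. So your gadget reveals only the total $\sum_m B_m$, not $2a$ separate digits, and the ``arithmetic progression of weights'' you rely on simply isn't there. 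The root cause is that your gadget is symmetric in ones and zeros, so what you gain on one side you lose on the other.

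The paper avoids exactly this cancellation by using an \emph{asymmetric} gadget. Its subarray $\Fsub{\ell}{j}=10^{(2^j+3)}1^{(\ell/4-1)}0^{(\ell/4-(2^j+3))}$ isolates a single $1$ at the front, separated from the block of ones by an exponentially-long run $0^{(2^j+3)}$. Under the monotone matching the second one of $\Usub{\ell}{2j}[\ell-4,\ell-1]$ must cross this run, contributing $(2^j+v_j)^2$, while the two zeros and the first one move only $O(1)$ positions. Summing all four characters gives contribution $v_j\cdot 2^{j+1}+2^{2j}+2$ (Fact~\ref{lem:contU}), so after subtracting the $v_j$-free terms $\sum_j(2^{2j}+2)$ one reads each $v_j$ off a distinct bit of $D_2^\star$. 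Note also that each paper gadget is responsible for a \emph{single} 4-block rather than $2a$ of them, and the paper sidesteps the cross-shift redundancy issue you flag (``distinct shifts do not simply re-reveal the same linear combinations'') entirely, by conditioning on the odd subarrays $\Usub{\ell}{2j+1}$ and iterating: Lemma~\ref{lem:half} recovers $\Usub{\ell}{2j}[\ell-4k-4,\ell-4k-1]$ from $(F_\ell\odot U_\ell)[k]$ by induction on $k$, so the shifts peel off fresh 4-blocks rather than re-sampling old linear combinations. Your sketch leaves both the window-balance/scale-separation and the shift-independence issues open (and says so), but the more basic problem is that the contribution formula your decoding rests on is not what the $L_2$-rearrangement distance actually computes.
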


In order to finish the description of the array $F$ we choose each subarray $F_\ell$ in accordance with Lemma~\ref{lem:L2subarray}. Any region of $F$ that is not part of any of the subarrays $F_\ell$ is filled with repeats of `$01$'. This ensures that these regions contain an equal number of zeros and ones (it is easily verified that each region has an even length).  This concludes the description of the array $F$.

The proof of Lemma~\ref{lem:L2-H-lower} then follows follows from Lemma~\ref{lem:L2subarray}. It is conceptually similar to the proof of Lemma~\ref{lem:conv-H-lower} for convolution which follows from Lemma~\ref{lem:subarray}. However, for the convolution problem, our proof relied on the fact that we could essentially consider the convolution of each pair $F_\ell$ and $U_\ell$ separately and simply add them up to find the convolution of $F$ and $U$. This is less immediate for  $L_2$-rearrangement distance because we need to rule out the possibility of characters from some $U_\ell$ being moved to positions in $F_{\ell'}$ for $\ell \neq \ell$. The proof (and the lower bound in general) relies on a key property of $L_2$-arrangement (proven in Lemma 3.1 from~\cite{AAB-2006}) which states that under the optimal rearrangement permutation, the $i$-th one (resp. zero) in one string is moved to the $i$-th one (resp.\@ zero) in the other. By controlling how the zeroes and ones are distributed in $U$ and $F$, we can limit how far any character is moved. 


\full{
We are now ready to prove Lemma~\ref{lem:L2-H-lower}, that is lower bound the conditional entropy of $\Yv$.

\begin{proof}[Proof of Lemma~\ref{lem:L2-H-lower}]
  Let $F$ be the array described above and let $U$ be drawn uniformly at random from $\{0101,1010\}^\frac{n}{4}$. Let $\ell\in\intlens$ and $t\in[n/2]$.
  Thus, conditioned on any fixed $\Xvknown$, the distribution of $\Xv$ is uniform on $\{0101,1010\}^{\frac{\ell}{4}\lg\ell}$.
  
  Recall that $\Xv$ arrives in the stream between arrival $\inta$ and $\intb$, after which $4\ell/\lg n$ values arrive in the gap. Thus, at the beginning of the second interval, at arrival $\intc$, $\Xv$ is aligned with the $(\ell\lg\ell)$-length suffix of the subarray $F_\ell$ of $F$. Over the $\ell$ arrivals in the second interval, $\Xv$ slides along $F_\ell$ similarly to Figure~\ref{fig:slide}. We now prove that since all values in $\Xvknown$ are fixed, the outputs $\Yv$ uniquely specify $F_\ell\odot\Xv$. The analogous property for convolution was immediate. First observe that by construction the prefix of $F$ up to the start of $F_\ell$ contains an equal number of $0$s and $1$s. Similarly for $F_\ell$ itself and the suffix from $F_\ell$ to the end of $F$. Once in every four arrivals, the substring of $U$ aligned with $F$ is guaranteed (by construction) to also have an equal number of $0$s and $1$s. Therefore the $L_2$ rearrangement distance is finite. It was proven in Lemma 3.1 from~\cite{AAB-2006} that (rephrased in our notation) under the optimal rearrangement permutation, the $k$-th one (resp. zero) in $F$ is moved to the $k$-th one (resp. zero) in $U$. Therefore, every element of $U_\ell$ is moved to an element in $F_\ell$. We can therefore recover any output in $F_\ell\odot\Xv$ by taking the corresponding output in $\Yv$ and subtracting, the costs of moving the elements that are in $U$ but not in $U_\ell$. It is easily verified that as $t$ is divisible by four, the corresponding output in $\Yv$ is one of those guaranteed to have an equal number of $0$s and $1$s.
  Thus, by Lemma~\ref{lem:L2subarray}, the conditional entropy
  \[
    H(\Yv\mid\Xvknown=\Xvfix)
    ~\geq~
    \epsilon\cdot \ell\cdot \lg \ell,
    ~\geq~
    \frac{\epsilon}{4}\cdot \ell\cdot \lg n,
  \]
  since $\ell\geq n^{1/4}$. By setting the constant $\entropy$ to $\epsilon/4$ we have proved Lemma~\ref{lem:L2-H-lower}.
\end{proof}}

\subsection{The proof of Lemma~\ref{lem:L2subarray}}

In this section we prove Lemma~\ref{lem:L2subarray}. We begin by explaining the high-level approach which will make one final composition of both $F_\ell$ and $U_\ell$ into sub-arrays. For any $j\geq 0$, let  $\Usub{\ell}{j}=U_\ell[\ell \cdot j, \ell \cdot (j+1) -1]$ i.e. $\Usub{\ell}{j}$ is the $j$-th consecutive $\ell$-length sub-array of $U_\ell$. The key property that we will prove in this section is given in Lemma~\ref{lem:half} which intuitively states that given half of the bits in $U_\ell$, we can compute the other half with certainty.

\begin{lem}\label{lem:half}
Let $U_\ell$ be chosen arbitrarily from $\{0101,1010\}^\frac{\ell}{4}$ . Given $F_\ell$, $F_\ell\odot U_\ell$ and $\Usub{\ell}{2j+1}$ for all $j\geq 0$, it is possible to uniquely determine  $\Usub{\ell}{2j}$ for  all $j \geq 0$.

\end{lem}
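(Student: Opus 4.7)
The plan is to reduce the $L_2$-rearrangement distance between $U_\ell$ and each aligned window of $F_\ell$ to an integer linear function of the bits of $U_\ell$'s block encoding, and then argue that for an appropriately structured $F_\ell$ the resulting linear system uniquely determines the bits in the unknown (even-chunk) positions from the bits in the known (odd-chunk) positions together with the outputs. First, I would invoke Lemma 3.1 of~\cite{AAB-2006}: since both $U_\ell$ and each window $\Fsub{\ell}{i} := F_\ell[4i,\, 4i+\ell\lg\ell - 1]$ contain equally many zeros and ones, the optimal rearrangement matches the $k$-th zero (resp.\ one) of one string to the $k$-th zero (resp.\ one) of the other, so that
\[
(F_\ell \odot U_\ell)[i] \;=\; \sum_k \bigl(P^{U}_k - P^{F,i}_k\bigr)^2 \;+\; \sum_k \bigl(Q^{U}_k - Q^{F,i}_k\bigr)^2,
\]
where $P^{\cdot}_k$ and $Q^{\cdot}_k$ denote the sorted positions of the $k$-th zero and one in the indicated string.

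Next, I would exploit the $\{0101,1010\}$ structure of $U_\ell$: encoding the $m$-th 4-block by $b_m \in \{0,1\}$ (with $0$ for $0101$ and $1$ for $1010$), the two zeros of that block lie at positions $4m + b_m$ and $4m + 2 + b_m$, and the two ones at $4m + 1 - b_m$ and $4m + 3 - b_m$. Substituting these into the squared differences, expanding, and using the identity $b_m^2 = b_m$, all quadratic-in-$b_m$ terms within each block cancel once the contributions from the block's two zeros and two ones are summed. What survives is an integer linear expression
\[
(F_\ell \odot U_\ell)[i] \;=\; \alpha_i \;+\; 2 \sum_m b_m\, w_m^{(i)},
\]
with $w_m^{(i)} = Q^{F,i}_{2m} + Q^{F,i}_{2m+1} - P^{F,i}_{2m} - P^{F,i}_{2m+1}$ and $\alpha_i$ both determined by $F_\ell$ and $i$ alone. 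Subtracting from each equation the contribution $2\sum_{m\in\text{odd}} b_m w_m^{(i)}$ coming from the bits in the given $\Usub{\ell}{2j+1}$'s yields a system of $\ell/4$ integer equations in the $(\ell\lg\ell)/8$ unknown bits $b_m$ indexed by the even chunks; uniquely solving this system recovers $\Usub{\ell}{2j}$ for every $j$.

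The main obstacle is this last uniqueness step: the system has more binary unknowns than integer equations, so uniqueness must come jointly from the $\{0,1\}$-constraint on the unknowns and from a careful choice of $F_\ell$ (for instance, the trivial choice $F_\ell = 0101\ldots$ makes every $w_m^{(i)}$ equal, so recovery is impossible). Following the ``different magnitudes'' heuristic outlined in the introduction, the strategy is to construct $F_\ell$ by arranging its zeros and ones so that, as $m$ ranges over the unknown positions, the coefficients $w_m^{(i)}$ span a range of magnitudes wide enough that each output isolates $\Omega(\lg\ell)$ bits of $b$, and the $\ell/4$ outputs collectively pin down all $(\ell\lg\ell)/8$ unknown bits. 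Formally this reduces to establishing a coin-weighing-style injectivity property for the $(\ell/4) \times (\ell\lg\ell)/8$ integer matrix $(w_m^{(i)})$, namely that no nonzero $\{-1,0,1\}$-vector lies in its kernel; an explicit construction of $F_\ell$ satisfying this property then completes the proof.
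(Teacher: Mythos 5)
Your linearization is essentially correct: invoking Lemma~3.1 of~\cite{AAB-2006} and using $b_m^2=b_m$ to reduce each output $(F_\ell\odot U_\ell)[i]$ to an affine function of the block bits $b_m$ is the same underlying algebra as the paper's contribution-counting, and your coefficient $w_m^{(i)}=Q^{F,i}_{2m}+Q^{F,i}_{2m+1}-P^{F,i}_{2m}-P^{F,i}_{2m+1}$ correctly captures how far each block is displaced. The problem is that you stop exactly where the substance of the lemma begins. You reduce everything to ``establish a coin-weighing-style injectivity property for the $(\ell/4)\times(\ell\lg\ell)/8$ integer matrix $(w_m^{(i)})$'' and then assert that ``an explicit construction of $F_\ell$ satisfying this property completes the proof'' --- but you never give that construction, and the injectivity property does not come for free. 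In fact, $(w_m^{(i)})$ inherits a heavy Toeplitz-like structure from the sliding window, so Erd\H{o}s--Spencer-type existence results for general $(0,1)$-matrices do not apply; indeed, the analogous injectivity statement for Toeplitz matrices is precisely the open Conjecture~\ref{con:entropy} that the paper is only willing to state as a conjecture for the convolution problem. Replacing the concrete lemma with a global kernel condition on a structured integer matrix is therefore not a simplification, it is the entire difficulty.

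What the paper actually does is avoid the global linear-algebra question altogether by designing $F_\ell$ so that a \emph{single} output suffices to decode one block from \emph{each} of the $\lg\ell/2$ unknown chunks, and then iterating. Concretely, $F_\ell$ is built from $\lg\ell/2$ subarrays $\Fsub{\ell}{j}=10^{(2^j+3)}1^{(\ell/4-1)}0^{(\ell/4-(2^j+3))}$, separated by padding of repeats of $1001$ whose contribution is a constant $2$ per block, independent of the stream. The exponentially growing run $0^{(2^j+3)}$ forces the last block of $\Usub{\ell}{2j}$ to be displaced by $2^j$, so in your notation $w_m=2^j$ for that block, and after subtracting the known contributions from $(F_\ell\odot U_\ell)[0]$ one is left with $D_2^\star=\sum_j v_j\cdot 2^{j+1}$: each $v_j\in\{0,1\}$ is simply the $(j{+}1)$-th bit of $D_2^\star$, so there is nothing to invert. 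Then one feeds in $(F_\ell\odot U_\ell)[1],(F_\ell\odot U_\ell)[2],\dots$ in turn, each output peeling off the next block of every $\Usub{\ell}{2j}$ simultaneously. You gesture at this ``different magnitudes'' idea but do not instantiate it; without the explicit $\Fsub{\ell}{j}$ and the observation that the $1001$ padding decouples the unknowns so that each output is a clean binary representation, the argument does not close, and your version would require proving an injectivity statement at least as hard as the one the paper treats as conjectural.
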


Before we prove Lemma~\ref{lem:half}, we briefly justify why Lemma~\ref{lem:L2subarray} is in-fact a straight-forward corollary of Lemma~\ref{lem:half}. If we pick $U_\ell$ uniformly at random from $\{0101,1010\}^\frac{\ell}{4}$ then by Lemma~\ref{lem:half}, the conditional entropy, $H(F_\ell \odot U_\ell\,|\,   \Usub{\ell}{2j+1} \text{ for all } j)$ is $\Omega(\ell \lg \ell)$. This is because we always recover $\Theta(\lg \ell)$ distinct $\Usub{\ell}{2j}$, each of which is independent and has entropy $\Omega(\ell)$ bits. It then immediately follows that $H(F_\ell\odot U_\ell) \geq H(F_\ell\odot U_\ell 
\,| \, \Usub{\ell}{2j+1} \text{ for all } j)$ as required. We also require for Lemma~\ref{lem:L2subarray} that $F_\ell$ contains an equal number ones and zeros. This follows immediately from the description of $F_\ell$ below.

\begin{figure*}[htb]
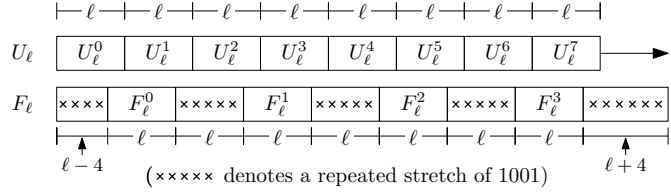

    \centering
    \insertdiagram{UIFI}
    \caption{\label{fig:UIFI} The high-level structure of $U_\ell$ and $F_\ell$. }
\end{figure*}

\paragraph{The subarray $F_\ell$} We now give the description of $F_\ell$ which requires one final decomposition into subarrays. This description is supported by Figure~\ref{fig:UIFI}. For each $j \in \left[ \lfloor (\lg \ell)/2 \rfloor \right]$, $F_\ell$ contains a subarray $\Fsub{\ell}{j}$ of length $\ell$. Intuitively, each sub-array $\Fsub{\ell}{j}$ will be responsible for recovering $\Usub{\ell}{2j}$. These subarrays occur in order in $F_\ell$. Before and after each $\Fsub{\ell}{j}$ there are stretches of repeats of the string $1001$. Specifically, before $\Fsub{\ell}{1}$ there are $\ell/4-1$ repeats the string $1001$. Between each $\Fsub{\ell}{j}$ and $\Fsub{\ell}{j+1}$ there are $\ell/4$ repeats of the string $1001$ and after  $\Fsub{\ell}{\lfloor (\lg \ell)/2\rfloor -1}$ there are  $\ell/4+1$ repeats. These repeats of $1001$ are simply for structural padding and as we will see the contribution of these repeated $1001$ strings to the $L_2$ rearrangement distance is independent of $U_\ell$. This follows because the cost of rearranging $1001$ to match either $1010$ or $0101$ is always $2$.

The structure of $\Fsub{\ell}{j}$ is as follows $\Fsub{\ell}{j}= 10^{(2^j+3)}1^{(\ell/4-1)}0^{(\ell/4-(2^j+3))}$. Here $0^z$ (resp.\@ $1^z$) is a string of $z$ zeros (resp.\@ ones).  Intuitively, the reason that the stretch of $0$s at the start of $\Fsub{\ell}{j}$ is the exponentially increasing with $j$ is so that distance the second one in $\Fsub{\ell}{j}$ (immediately after the stretch of $0$s) is forced to move also exponentially increasing with $j$. This is will allow us to recover each $\Usub{\ell}{2j}$ from a different bit in the outputs.

\paragraph{Proof of Lemma~\ref{lem:half}} We are now in a position to prove Lemma~\ref{lem:half}. Our main focus will be on first proving that given $F_\ell$, $\Usub{\ell}{2j+1}$ for all $j$ and $F_\ell\odot U_\ell$, we can uniquely determine $\Usub{\ell}{2j}[\ell-4,\ell-1]$ for each $j\geq 0$. That is, for each $j$ whether the last four symbols of $\Usub{\ell}{2j}$ are $0101$ or $1010$. This is shown diagrammatically in Figure~\ref{fig:Dstar}. We argue that by a straight-forward repeated application of this argument we can in-fact recover the whole of $\Usub{\ell}{2j}$ for all $j\geq 0$.

\begin{figure*}[tbh]
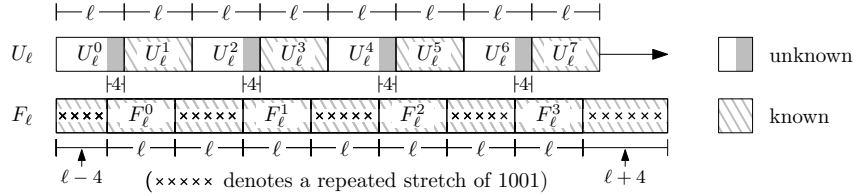

    \centering
    \insertdiagram{Dstar}
    \caption{\label{fig:Dstar} We can determine $\Usub{\ell}{2j}[\ell-4,\ell-1]$ if we know $F_\ell$, every $\Usub{\ell}{2j+1}$ and $F_\ell\odot U_\ell$.}

\end{figure*}

We will begin by making some simplifying observations about $(F_\ell\odot U_\ell)[0]$. Recall that $(F_\ell\odot U_\ell)[0]$ was defined to be the $L_2$ rearrangement distance between $F_\ell[0,|U_\ell| -1]$ and $U_\ell$. The first observation is that is finite because both strings contain an equal number of zeros and ones. 

The $L_2$ rearrangement distance $(F_\ell\odot U_\ell)[0]$ can be expressed as the sum of the \emph{contributions} from moving each $U_\ell[i]$, over all $i \in [m]$.
Let the contribution of $U_\ell[i]$, denoted, $\contrib(i)$ be the square of the distance that $U_\ell[i]$ is moved under the optimal rearrangement. We then have that $(F_\ell\odot U_\ell)[0]= \sum_i \contrib(j)$. Finally, we let $D^\star$ be the sum of the contributions of the locations in every $\Usub{\ell}{2j}[\ell-4,\ell-1]$, i.e.\@
 $D^\star= \sum_{j} \sum_{k=0}^{3} \left(\contrib(2j \cdot \ell +(\ell-4) + k \right)$. We will also refer to the contribution of a substring which is defined naturally to be the sum of the contributions of its constituent characters. For example the contribution of $\Usub{\ell}{j}$ equals $\sum \left\{ \contrib(r) \st r \in [\ell \cdot j, \ell \cdot (j+1) -1] \right\}.$

Our proof will be in two stages. First we will prove in Lemma~\ref{lem:compD} that we can compute $D^\star$ from $F_\ell$, $F_\ell\odot U_\ell$ and $\Usub{\ell}{2j+1}$ for all $j\geq 0$. Second we will prove that for any $j>0$, we can determine $\Usub{\ell}{2j}[\ell-4,\ell-1]$ from $D^\star$.

In the proof of Lemma~\ref{lem:compD} we argue that $D^\star$ can be calculated directly from $(F_\ell\odot U_\ell)[0]$ by subtracting the contributions of $\Usub{\ell}{2j+1}$  and $\Usub{\ell}{2j}[0,\ell-5]$ for all $j\geq 0$. More specifically, we will prove that the contribution of any $\Usub{\ell}{2j+1}$ can calculated from $\Usub{\ell}{2j+1}$ and $F_\ell$, which are both known. In particular, the contribution of any $\Usub{\ell}{2j+1}$ is independent of every unknown $\Usub{\ell}{2j}$. Further, we will prove that although $\Usub{\ell}{2j}$ is unknown, the contribution of $\Usub{\ell}{2j}[0,\ell-5]$, always equals $\ell/2-2$, regardless of the choice of $U_\ell$. 
\begin{lem}\label{lem:compD}
$D^\star$ can be computed from $F_\ell$, $F_\ell\odot U_\ell$ and $\Usub{\ell}{2j+1}$ for all $j\geq 0$.
\end{lem}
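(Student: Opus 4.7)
The plan is to express $D^{\star}$ by a subtraction from $(F_\ell\odot U_\ell)[0]$. Since every $\contrib(i)$ appears in $(F_\ell\odot U_\ell)[0]$, we can split
\[
(F_\ell\odot U_\ell)[0] \;=\; T_{\mathrm{odd}} \;+\; T_{\mathrm{pre}} \;+\; D^{\star},
\]
where $T_{\mathrm{odd}} = \sum_{j\geq 0}\contrib(\Usub{\ell}{2j+1})$ collects the contributions of the odd blocks and $T_{\mathrm{pre}} = \sum_{j\geq 0}\contrib(\Usub{\ell}{2j}[0,\ell-5])$ collects the contributions of the prefixes of the even blocks. Since $(F_\ell\odot U_\ell)[0]$ is part of the allowed data (it is the first entry of $F_\ell\odot U_\ell$), it suffices to argue that both $T_{\mathrm{odd}}$ and $T_{\mathrm{pre}}$ are computable from $F_\ell$ and the known odd blocks $\Usub{\ell}{2j+1}$.

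The central tool throughout will be the rank-preservation property of Lemma~3.1 of~\cite{AAB-2006}: the optimal rearrangement sends the $k$-th one (resp.\ zero) of $U_\ell$ to the $k$-th one (resp.\ zero) of $F_\ell[0,|U_\ell|-1]$. A key feature of the hard distribution is that each block $\Usub{\ell}{k}$ contains exactly $\ell/2$ ones and $\ell/2$ zeros, regardless of which of the two patterns $0101\cdots$ or $1010\cdots$ it realises. Consequently, the rank in $U_\ell$ of the character at offset $i$ of $\Usub{\ell}{2j+1}$, and of the character at offset $i$ of $\Usub{\ell}{2j}[0,\ell-5]$, depends only on $j$, $i$ and the character's value, not on the unknown even blocks. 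This is what will allow us to compute each of the two sums while only seeing the odd blocks.

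Computing $T_{\mathrm{odd}}$ is then direct: for every character of the known block $\Usub{\ell}{2j+1}$, the observation above together with $F_\ell$ pins down its rank and therefore its destination in $F_\ell[0,|U_\ell|-1]$, so its squared displacement, and hence $T_{\mathrm{odd}}$, can be computed explicitly. The real obstacle is $T_{\mathrm{pre}}$, since the even blocks themselves are unknown. Here the plan is to prove the stronger claim that $\contrib(\Usub{\ell}{2j}[0,\ell-5]) = \ell/2 - 2$ for every $j$ under either pattern, so $T_{\mathrm{pre}}$ becomes a closed-form expression depending only on $\ell$. The key geometric fact is that the construction of $F_\ell$ places $\Usub{\ell}{2j}[0,\ell-5]$ opposite exactly $(\ell-4)/4$ complete copies of the padding word $1001$, perfectly aligned at $4$-character boundaries. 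At each such boundary, the cumulative counts of ones and zeros in $F_\ell[0,|U_\ell|-1]$ and in $U_\ell$ agree (each window adds two of each), so by rank-preservation no character is mapped across a window boundary. Inside any one window, rearranging $1001$ into either $0101$ or $1010$ is a single adjacent swap of total cost $1+1=2$, and summing over the $(\ell-4)/4$ windows yields $\ell/2-2$.

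The hardest step will be formalising the decomposition along window boundaries; this reduces to a short induction on windows using the count-matching statement above. Once it is in place, the per-window cost is a four-character case check, $T_{\mathrm{pre}} = (\ell/2-2)\cdot(\lg\ell)/2$ is determined, and the lemma follows via $D^{\star} = (F_\ell\odot U_\ell)[0] - T_{\mathrm{odd}} - T_{\mathrm{pre}}$.
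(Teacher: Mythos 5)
Your proposal is correct and matches the paper's argument in all essential respects: the same decomposition $(F_\ell\odot U_\ell)[0] = D^\star + (\text{contribution of odd blocks}) + (\text{contribution of prefixes of even blocks})$, the same reliance on the rank-preservation property of Lemma~3.1 of~\cite{AAB-2006} combined with the fact that every $4$-block of $U_\ell$ and every $1001$-padding window contribute two ones and two zeros to the running counts, and the same key observation that each padding window costs exactly $2$ regardless of whether the aligned block of $U_\ell$ is $0101$ or $1010$, giving a fixed per-block prefix contribution of $\ell/2-2$. Your slightly more explicit phrasing via window-boundary count-matching is an unpacking of the same mechanism the paper uses implicitly when it asserts that the two ones (resp.\ zeros) in $\Usub{\ell}{2j}[4i,4i+3]$ map to the two ones (resp.\ zeros) in the aligned copy of $1001$.
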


\full{\begin{proof}
The value of $D^\star$ is calculated directly from $(F_\ell\odot U_\ell)[0]$ by subtracting the contributions of $\Usub{\ell}{2j+1}$  and $\Usub{\ell}{2j}[0,\ell-5]$ for all $j\geq 0$. We will now prove that for any $j$, we can calculate the contribution of $\Usub{\ell}{2j+1}$ and that contribution of $\Usub{\ell}{2j}[0,\ell-5]$, in-fact always equals $\ell/2-2$, regardless of the choice of $U_\ell$. 

In this proof we rely heavily on Lemma 3.1 from~\cite{AAB-2006} which states that under the optimal rearrangement permutation, the $i$-th one (resp. zero) in $U_\ell$ is moved to the $i$-th one (resp. zero) in $F_\ell[0,|U_\ell| -1]$. 

For any $j$, consider,  $\Usub{\ell}{2j}$ and $\Usub{\ell}{2j+1}$. The number of ones in $\Usub{\ell}{2j}$ (resp. $\Usub{\ell}{2j+1}$) is fixed, independent of the choice of $U_\ell$. In particular there are exactly $\ell/2$ zeroes and $\ell/2$ ones.  It is easily verified that, by construction, $F_\ell[2j\cdot \ell ,(2j+2)  \cdot \ell -1]$ also contains exactly $\ell$ zeros and $\ell$ ones. Therefore, the $i$-th one (resp.\@ zero) in $\Usub{\ell}{2j}$ is moved to the $i$-th one (resp.\@ zero) in  $F_\ell[2j\cdot \ell ,(2j+2)  \cdot \ell -1]$. Similarly, the $i$-th one (resp.\@ zero) in $\Usub{\ell}{2j}$ is moved to the $(i+\ell/2)$-th one (resp.\@ zero) in  $F_\ell[2j\cdot \ell ,(2j+2)  \cdot \ell -1]$

Consider any $\Usub{\ell}{2j+1}$ which is known. By the above observation, we can therefore determine which position in $F_\ell[2j\cdot \ell ,(2j+2)  \cdot \ell -1]$, each character in $\Usub{\ell}{2j+1}$ is moved to under the optimal rearrangement. From this we can then directly compute the contribution of each $\Usub{\ell}{2j+1}$ to $(F_\ell\odot U_\ell)[0]$.

Consider any $\Usub{\ell}{2j}$ which is unknown. As observed above, the $i$-th one (resp.\@ zero) in $\Usub{\ell}{2j}$ is moved to the $i$-th one (resp.\@ zero) in  $F_\ell[2j\cdot \ell ,(2j+2)  \cdot \ell -1]$. By construction, we have that  $F_\ell[2j\cdot \ell,(2j+1) \cdot \ell -5]$ consists entirely  of repeats of $1001$.  Further for any $i$, we have that $\Usub{\ell}{2j}[4i,4i+3]$ is either $1010$ or $0101$. Therefore for all $i<\ell/4$ we have that the two ones (resp.\@ zeroes) in $\Usub{\ell}{2j}[4i,4i+3]$ are moved to the two ones (resp.\@ zeroes) in $F_\ell[2j \cdot \ell + 4i,2j \cdot \ell + 4i+3]=1001$. The key observation is that regardless of whether $\Usub{\ell}{2j}[4i,4i+3]=1010$ or $0101$, the contribution of $\Usub{\ell}{2j}[4i,4i+3]$ is $2$. Therefore for any $U_\ell$, the contribution of  $\Usub{\ell}{2j}[0,\ell-5]$ is $\ell/2-2$.
\end{proof}}

In Lemma 11 we will prove that we can compute $\Usub{\ell}{2j}[\ell-4,\ell-1]$ from $D^\star$ (for any sufficiently large $j$). The intuition behind this is given by Fact~\ref{lem:contU} which gives an explicit formula for the contribution of $\Usub{\ell}{2j}[\ell-4,\ell-1]$. Observe that the contribution depends only on whether  $\Usub{\ell}{2j}[\ell-4,\ell-1]$ equals $1010$ ($v_j=1$) or $0101$ ($v_j=0$). The intuition is that we can extract $v_j$ from the $(j+1)$-th bit of $D^\star$.

\begin{fact}\label{lem:contU}
For any $j$, let  $v_j=1$ if $\Usub{\ell}{2j}[\ell-4,\ell-1]=1010$ and $v_j=0$ otherwise. The contribution of $\Usub{\ell}{2j}[\ell-4,\ell-1]$ is exactly $v_j \cdot 2^{j+1} + 2^{2j}+ 2$.
\end{fact}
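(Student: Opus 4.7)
The proof is a direct calculation using two ingredients: the explicit structure of $\Fsub{\ell}{j}$ from Section~\ref{sec:FL2}, and Lemma 3.1 of~\cite{AAB-2006}, which guarantees that under the optimal rearrangement the $i$-th one (resp.\@ zero) of $\Usub{\ell}{2j}$ is matched with the $i$-th one (resp.\@ zero) of the aligned region $F_\ell[2j\ell,(2j+2)\ell-1]$. The plan is first to identify on both sides which four characters are involved in the contribution we are computing, then to read off their exact coordinates, and finally to sum the four squared move distances under each of the two choices $v_j\in\{0,1\}$.

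For the $U$-side, the prefix $\Usub{\ell}{2j}[0,\ell-5]$ is a concatenation of $\ell/4-1$ blocks drawn from $\{0101,1010\}$, so it contains exactly $\ell/2-2$ ones and $\ell/2-2$ zeros. Therefore the two ones and two zeros of $\Usub{\ell}{2j}[\ell-4,\ell-1]$ are precisely the $(\ell/2-1)$-th and $\ell/2$-th ones and zeros of the whole block $\Usub{\ell}{2j}$. On the $F$-side, the portion of $F_\ell[2j\ell,(2j+2)\ell-1]$ preceding $\Fsub{\ell}{j}$ is $\ell/4-1$ copies of $1001$ and so also contributes exactly $\ell/2-2$ ones and $\ell/2-2$ zeros; hence the four target positions of our matching are the first two ones and the first two zeros of $\Fsub{\ell}{j}$ itself. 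Reading these off from the explicit form of $\Fsub{\ell}{j}$, the first one sits at its starting position $(2j+1)\ell-4$, followed immediately by two zeros at positions $(2j+1)\ell-3$ and $(2j+1)\ell-2$, and the second one lies $\Theta(2^j)$ positions further to the right, across the exponentially-growing initial run of zeros that is the whole point of the construction.

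It then remains to sum the four squared distances in each case. Three of the four characters always contribute only $0$ or $1$, while the single long move of a one from $\Usub{\ell}{2j}[\ell-4,\ell-1]$ to the second one of $\Fsub{\ell}{j}$ covers a distance equal to $2^j$ when $v_j=0$ (the trailing one lies at $U$-offset $\ell-1$) and to $2^j+1$ when $v_j=1$ (the trailing one lies at $U$-offset $\ell-2$, one position closer to the first one of $\Fsub{\ell}{j}$ and therefore one position further from its target). Summing gives totals $2^{2j}+2$ when $v_j=0$ and $(2^j+1)^2+1=2^{2j}+2^{j+1}+2$ when $v_j=1$, which combine into the claimed closed form $v_j\cdot 2^{j+1}+2^{2j}+2$. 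The only real obstacle is careful bookkeeping: one has to compute the four absolute coordinates inside $[2j\ell,(2j+2)\ell-1]$ from the definition of $\Fsub{\ell}{j}$ and track, under each choice of $v_j$, which of the four symbols of $\Usub{\ell}{2j}[\ell-4,\ell-1]$ is a one and which is a zero; the algebra itself is a short list of integer subtractions.
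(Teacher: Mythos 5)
Your argument is correct and follows essentially the same route as the paper's own proof: invoke Lemma~3.1 of \cite{AAB-2006} to pin down the matching, count ones and zeros in $\Usub{\ell}{2j}[0,\ell-5]$ and in the $1001$-prefix of $F_\ell[2j\ell,(2j+2)\ell-1]$ to show that the four tail characters map to the first two ones and first two zeros of $\Fsub{\ell}{j}$, and then sum the four squared displacements. The paper carries out the final sum with a single $v_j$-parametrised expression $(2^j+v_j)^2+v_j^2+2(1-v_j)^2$ rather than by separate casework on $v_j\in\{0,1\}$, but this is purely presentational and yields the same $v_j\cdot 2^{j+1}+2^{2j}+2$.
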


\full{\begin{proof}
We begin by arguing that under the optimal rearrangement permutation, the two ones (resp.\@ zeroes) in $\Usub{\ell}{2j}[\ell-4,\ell-1]$ are moved to the leftmost two ones (resp.\@ zeroes) in $\Fsub{\ell}{j}$. We will again rely heavily on Lemma 3.1 from~\cite{AAB-2006} which states that under the optimal rearrangement permutation, the $i$-th one (resp. zero) in $U_\ell$ is moved to the $i$-th one (resp. zero) in $F_\ell[0,|U_\ell| -1]$.

In the proof of Lemma~\ref{lem:compD}, we argued that the $i$-th one (resp.\@ zero) in $\Usub{\ell}{2j}$ is moved to the $i$-th one (resp.\@ zero) in  $F_\ell[2j\cdot \ell ,(2j+2)  \cdot \ell -1]$. It is easily verified that $F_\ell[2j\cdot \ell ,(2j+2)  \cdot \ell -1]$ consists of exactly $\ell/4-1$ repeats of $1001$ followed by $\Fsub{\ell}{j}$. Therefore as $\Usub{\ell}{2j}[0,\ell-5]$
contains exactly $\ell/2-2$ ones (resp.\@ zeroes), the two ones in  $\Usub{\ell}{2j}[\ell-4,\ell-1]$ are indeed moved to the leftmost two ones (resp.\@ zeroes) in $\Fsub{\ell}{j}$.

\begin{figure*}[ht]
    \centering
    \insertdiagram{zoomdiagram}
    \caption{\label{fig:zd} The rearrangement of the symbols in $\Usub{\ell}{2j}[\ell-4,\ell-1]$ under the optimal rearrangement permutation. The highlighted region is $\Fsub{\ell}{j}$.}
\end{figure*}

We now argue about the contribution of each character in $\Usub{\ell}{2j}[\ell-4,\ell-1]$ in turn. This argument is  is supported by Figure~\ref{fig:zd}.

First consider the first one in $\Usub{\ell}{2j}[\ell-4,\ell-1]$ which is moved to the first one in $\Fsub{\ell}{j}$. By construction we have that, when indexed from the start of $F_\ell$, the leftmost one in $\Fsub{\ell}{j}$ is at position $x=(2j+1)\cdot \ell -4.$ Similarly, when indexed from the start of $U_{\ell}$, the leftmost one in $\Usub{\ell}{2j}[\ell-4,\ell-1]$ is at position $x + (1-v_j)$. Therefore the contribution of the first one in $\Usub{\ell}{2j}$ is exactly $(1-v_j)^2.$

The first zero  in $\Usub{\ell}{2j}[\ell-4,\ell-1]$ is moved to the first zero in $\Fsub{\ell}{j}$. The first zero in $\Usub{\ell}{2j}[\ell-4,\ell-1]$ is at position $x + v_j.$ The first zero in $\Fsub{\ell}{j}$ is at position $x+1.$ Therefore the contribution of the first zero  in $\Usub{\ell}{2j}$  is also exactly $(1-v_j)^2$.

The second zero  in $\Usub{\ell}{2j}[\ell-4,\ell-1]$ is moved to the second zero in $\Fsub{\ell}{j}$. The second zero in $\Usub{\ell}{2j}[\ell-4,\ell-1]$ is at position $x + 2+v_j$ and the second zero in $\Fsub{\ell}{j}$ is at position $x +2$. Therefore the contribution of the first zero  in $\Usub{\ell}{2j}$  is exactly $v_j^2$. 

Finally, the second one in $\Usub{\ell}{2j}[\ell-4,\ell-1]$ is moved to the second  one  in $\Fsub{\ell}{j}$. Again, by construction, we have that the second one in $\Fsub{\ell}{j}$ is at position $x + 3 + 2^j$ (indexed from the start of  $F_\ell$). Similarly, the second one in $\Usub{\ell}{2j}[\ell-4,\ell-1]$ is at position $x + 3-v_j$. Therefore the contribution of the second one  in $\Usub{\ell}{2j}$  is exactly, $(2^{j}+v_j)^2$. Summing over all four characters we have that the total contribution of  $\Usub{\ell}{2j}[\ell-4,\ell-1]$ is:

\[ (2^{j}+v_j)^2 + v_j^2 + 2(1-v_j)^2 \]

\noindent Expanding and simplifying we have that this is

\[ (2^{2j}+ 2) + v_j \cdot 2^{j+1} +4v_j^2 -4v_j \]

\noindent As $v_j \in \{0,1\}$ we have that $v_j^2 = v_j$ so this simplifies further to:

\[  v_j \cdot 2^{j+1} + 2^{2j}+ 2 \]
\end{proof}}

\noindent We can now prove Lemma~\ref{lem:fromD} which follows almost immediately from Fact~\ref{lem:contU}. 

\begin{lem}\label{lem:fromD}
For any $j \geq 0$, it is possible to compute $\Usub{\ell}{2j}[\ell-4,\ell-1]$ from $D^\star$.
\end{lem}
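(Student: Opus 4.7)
The plan is to express $D^\star$ explicitly in terms of the unknown bits $v_j$ using Fact~\ref{lem:contU} and then invert this expression by reading off bits in the binary representation. By definition, $D^\star$ is the sum, over all relevant $j$, of the contributions of $\Usub{\ell}{2j}[\ell-4,\ell-1]$. Plugging in Fact~\ref{lem:contU}, this gives
\[
  D^\star \;=\; \sum_{j=0}^{J-1} \bigl( v_j \cdot 2^{j+1} + 2^{2j} + 2 \bigr),
\]
where $J = \lfloor (\lg \ell)/2 \rfloor$ is the number of subarrays $\Fsub{\ell}{j}$ built into $F_\ell$, and each $v_j\in\{0,1\}$.

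The quantity $C \;=\; \sum_{j=0}^{J-1}\bigl(2^{2j}+2\bigr)$ is a fixed constant that depends only on $\ell$ (hence on $F_\ell$), so it can be computed without any knowledge of $U_\ell$. Subtracting gives
\[
  D^\star - C \;=\; \sum_{j=0}^{J-1} v_j \cdot 2^{j+1}.
\]
Because the exponents $j+1$ are all distinct and range over $\{1,2,\dots,J\}$, the right-hand side is precisely the binary representation of an integer whose bit at position $j+1$ equals $v_j$. Thus $v_j$ is recovered by reading off the appropriate bit of $D^\star-C$, and from $v_j$ we immediately obtain $\Usub{\ell}{2j}[\ell-4,\ell-1]$, which is $1010$ if $v_j=1$ and $0101$ if $v_j=0$.

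There is no real obstacle here; Fact~\ref{lem:contU} has already done the work of producing a contribution formula of the form ``known constant plus $v_j$ times a distinct power of two,'' and this is exactly what allows simultaneous decoding of all $v_j$ from a single scalar. The only sanity check is that the powers $2^{j+1}$ used to encode the $v_j$ do not collide with one another and, after removing the constant $C$, yield a unique binary decoding — both of which are immediate from the formula above.
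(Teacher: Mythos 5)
Your proposal is correct and follows exactly the same route as the paper: subtract the known constant $\sum_j (2^{2j}+2)$ from $D^\star$ to isolate $\sum_j v_j\cdot 2^{j+1}$, then recover each $v_j$ as the $(j+1)$-th bit. The paper names this quantity $D_2^\star$, but the argument is identical.
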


\full{\begin{proof}

Let $D_2^\star$ equal $D^\star - \sum_j (2^{2j} + 2)$ which can be calculated directly from $D^\star$. An alternative and equivalent definition of $D_2^\star$ follows from Fact~\ref{lem:contU} and is given by \[ D_2^\star = \sum_j v_j \cdot 2^{j+1}. \]

We can therefore compute $v_j$ and hence $\Usub{\ell}{2j}[\ell-4,\ell-1]$ by inspecting the $(j+1)$-th bit of $D_2^\star$.

\end{proof}}

\full{\paragraph{Recovering the rest of $U_{\ell,(2j)}$} So far we have only proven that we can recover $\Usub{\ell}{2j}[\ell-4,\ell-1]$  for all $j$. The claim that we can in-fact recover the whole of $\Usub{\ell}{2j}$ follows by repeatedly application of the the proof above. Specifically, once we have recovered $\Usub{\ell}{2j}[\ell-4,\ell-1]$ for all $j$, we can use this additional information (and $(F_\ell\odot U_\ell)[1]$ instead of  $(F_\ell\odot U_\ell)[0]$) to recover $\Usub{\ell}{2j}[\ell-8,\ell-5]$ for all $j$ and so on. More formally we proceed by induction on increasing $k$ by observing that using the above argument given  $F_\ell$, $(F_\ell\odot U_\ell)[k]$, $\Usub{\ell}{2j+1}$ for all $j\geq 0$ and  $\Usub{\ell}{2j+1}[\ell-4k,\ell-1]$ for all $j\geq 0$ we can recover $\Usub{\ell}{2j+1}[\ell-4k-4,\ell-4k-1]$ for all $j$.}

\bibliographystyle{plain-fi}
\bibliography{longnames,bib-latest,bristol}

\end{document}